\newcommand{\Rmnum}[1]{\expandafter\@slowromancap\romannumeral #1@}
\newtheorem{theorem}{Theorem}
\newtheorem{corollary}{Corollary}
\newtheorem{lemma}{Lemma}
\newtheorem{remark}{Remark}
\begin{document}
\title{Uplink Spectral Efficiency Analysis of Decoupled Access in Multiuser MIMO Communications}
\author{Ran Li, Kai Luo,~\IEEEmembership{~Member,~IEEE,} Tao Jiang, ~\IEEEmembership{Senior~Member,~IEEE}, and Shi Jin,~\IEEEmembership{~Member,~IEEE}\\
\thanks{This research was supported by the National Natural Science Foundation of China (NSFC) under Grant 61531011.

R. Li, K. Luo and T. Jiang are with Wuhan National Laboratory For Optoelectronics, School of Electronic Information and Communications, Huazhong University of Science and Technology, Wuhan 430074, P. R. China (e-mails: \{ranlee, kluo, taojiang\}@hust.edu.cn).

S. Jin is with the National Mobile Communications Research Laboratory, Southeast University, Nanjing 210096, P. R. China (e-mail: jinshi@seu.edu.cn).
}}

\maketitle

\begin{abstract}
  In a heterogeneous network consisting of macro base stations (MBSs) and small base stations (SBSs), the traditional cell association policy, i.e., coupled access (CA), is far from optimal, due to the significant difference between the coverage and transmit powers of MBSs and SBSs. Hence, users may choose to associate with different types of BSs in downlink (DL) and uplink (UL), i.e., decoupled access (DA), to enhance spectral efficiency. In this paper, DA in multiuser MIMO communications is investigated in terms of UL spectral efficiency. Firstly, we obtain the UL association probabilities. In contrast to the CA scenario, association probabilities for DA scenario only depend on the densities of BSs. Hence, DA allows UL and DL to be totally independent. Secondly, we derive lower bounds on the spectral efficiency. The lower bounds show that, different from CA, the UL spectral efficiency for DA scenario is irrelative with the transmit powers of BSs, which implies DA allows users to associate with any BSs that can achieve the highest UL spectral efficiency. Finally, the spectral efficiencies for DA and CA scenarios are compared via simulation results, where it can be concluded that
the spectral efficiency in multiuser MIMO systems is improved by DA.
\end{abstract}
\begin{keywords}
Heterogeneous networks, decoupled access, spectral efficiency, multiuser MIMO.
\end{keywords}
\section{Introduction}\label{section1}
In the quest for the ever increasing traffic demands, a growing number of base stations~(BSs), especially low-power small BSs (SBSs), are added to the conventional single-tier wireless cellular networks, leading to the evolution of current networks towards a multi-tier heterogeneous infrastructure \cite{5876496,6171992,6472194}.
In a heterogeneous deployment, while the existing macro BSs~(MBSs) provide full area coverage, various complementary SBSs, e.g., pico BSs and femto BSs, help offload MBSs and provide high traffic capacity as well as enhanced service experience \cite{astely2013lte,4623708,5483516}.
However, different types of BSs result in massive differences in transmit powers and hence coverage areas, introducing a major asymmetry between uplink (UL) and downlink (DL)~\cite{6476878}, i.e., the optimal BSs for a user in DL and UL may be different.
Meanwhile, symmetric traffic applications, e.g., video calls and social networking, call for high traffic demands in UL, leading to the increasing importance of improving UL performance \cite{6516885}.
Hence, current cell association in traditional cellular networks, which is, one user selects the same BS in UL as that in DL according to the maximum DL received power, i.e., coupled access (CA), is far from optimal in heterogeneous networks. Therefore, decoupled access (DA) for UL and DL is highly demanded for investigation \cite{6736746}.

DA allows access points in UL and DL to be different, contributing to better resource allocation between cells, based on channel conditions, service types and BS traffic loads~\cite{6736745}, thus resulting in enhanced UL performance.~Some works have been done on the analysis of DA~\cite{7037069,7003998,DBLP:journals/corr/SmiljkovikjEPBDGI14,7247166,7112544,7249179}, where~the locations of BSs and users were modeled as homogeneous Poisson point processes~(PPPs), which was described in~\cite{6171996,6524460,Andrews2011}.
The UL throughput of a DA system was studied in~\cite{7037069,7003998}.
The authors in \cite{7037069} investigated the performance of DA in a network consisting of a MBS and a SBS using a system level simulation tool Atoll, proving that gains can be achieved by DA in terms of UL throughput.
Then, the association probabilities and average throughput for users with DA and CA were analyzed in a heterogeneous network composed of a MBS tier and a SBS tier in \cite{7003998}.
Based on the association probabilities, DA was studied from the UL spectral efficiency perspective in \cite{DBLP:journals/corr/SmiljkovikjEPBDGI14,7247166}, where the superiority of DA over CA was demonstrated.
In~\cite{DBLP:journals/corr/SmiljkovikjEPBDGI14}, the spectral efficiency of a decoupled system was studied analytically for a homogeneous user domain and validated by a real-world simulation, while the spectral efficiency and energy efficiency were calculated in \cite{7247166} for a heterogeneous user domain, i.e., the transmit power of users associated to MBSs is higher than that of users associated to SBSs.
Moreover, besides the analysis of UL performance, the authors in \cite{7112544} also took DL performance into consideration and studied the joint UL and DL rate coverage, proving that DA leads to significant improvement in the joint rate coverage over the traditional CA in a $K$-tier heterogeneous network.
Furthermore, instead of taking the UL association decisions based only on the UL received power for DA scenario, the authors in \cite{7249179} proposed a cell association algorithm which extends the association criterion to include the cell load and backhaul capacity.

However, the above works consider a single user scenario where each user is served by a BS while each BS serves only a single user in one resource block.
In fact, a BS, especially a MBS with large antenna arrays, might serve multiple users simultaneously in a given resource block.
In this paper, the spectral efficiency analysis is considered in this multiuser scenario where each user is served by a BS while each BS serves multiple users in one resource block.
Different from the previous works, the key features of the work in this paper are:
\begin{itemize}
  \item In a MIMO system, precoders/detectors are required. The zero-forcing (ZF) precoder and detector are used according to \cite{5956512} and \cite{6082486}, since the performance of the ZF detector is better than that of the maximum-ratio combining detector and the complexity of the ZF detector is lower than that of the minimum mean squared error detector. Besides, we normalize each column of the precoder rather than normalize the whole matrix, since the former is shown to deliver a higher sum-power and sum-rate than the latter according to~\cite{6560005}.
  \item The transmit powers of BSs are assumed to be equally allocated among the users associated to them. By using the equal power allocation, cell associations with CA are analyzed for a comparison with the DA scenario.
\end{itemize}
Based on the two key features, we model the locations of BSs and users as independent homogeneous PPPs.
Then, the received signals in DL and UL are derived.
The contributions of the paper are as follows.

First, we obtain the association probabilities of users with DA and CA in the multiuser scenario.
According to the derived average received signal power and the association criterion, theoretical analysis of cell association is presented.
By applying the cumulative distribution function (CDF) and the corresponding probability density function (PDF) of the distance between one user and its nearest BS, the association probabilities for both DA and CA scenarios are derived.
Then, the association probabilities are simplified for a number of special cases, namely combinations of~(i) each BS serves at most one user, (ii) numbers of antennas equipped in any BSs are identical, and (iii) the transmit power of MBSs is the same as that of SBSs.
In the case that (i) and (ii) are taken, we provide the same results as those for single user scenario in the previous works.
Furthermore, according to the derived association probabilities, we present the following insights.
First, the raising transmit power of MBSs results in higher probability that users with CA will associate with MBSs, while the association probabilities of users with DA remain unchanged.
Second, with the increase of density of SBSs, users with both CA and DA are less likely to associate with MBSs in UL.
The insights are validated later by simulations.

Next, we develop the lower bounds on the spectral efficiency with both DA and CA.
Since the spectral efficiency is defined as the expectation of channel capacity normalized by system bandwidth where the signal-to-interference-plus-noise ratio~(SINR) is involved, based on the received signal in UL, we analyze the spectral efficiency with the derived UL SINR.
Moreover, the lower bounds are also derived for a special case when each BS serves at most one user.
In that particular case, it is further shown that the spectral efficiency of users with DA and CA are identical when BSs have equal numbers of antennas and equal transmit powers.
The analysis of the spectral efficiency are validated by simulations, where we observe that the lower bounds can describe the trends of the accurate spectral efficiency and provide tractable predictions of the ratio between the spectral efficiencies with DA and CA.
Then, a comparison between the spectral efficiencies for DA and CA scenarios is made by simulations, leading to the insight that the spectral efficiency of users with DA is much higher than that of users with CA over a broad range of parameter~values.

The rest of the paper is organized as follows. In Section \ref{section2}, the system model is described and the received signals in DL and UL are derived. In Section \ref{section3}, the association probabilities are calculated. Then the spectral efficiencies with DA and CA are analyzed in Section \ref{section4}. In Section~\ref{section5}, numerical results are conducted and a comparison between the spectral efficiencies with DA and CA is made. Finally, we conclude the paper in Section \ref{section6}.

 \emph{Notation:} Upper and lower case boldface letters are used to denote matrices and vectors, respectively. The conjugate transpose is represented by $(\cdot)^H$. The trace of a matrix is denoted as~${\rm tr}(\cdot)$. The expectation operator with respect to~$x$ is represented as $\mathbb{E}_x\left[\cdot\right]$. The sets of complex- and real-valued $N\times K$ matrices are denoted as $\mathbb{C}^{N\times K}$ and $\mathbb{R}^{N\times K}$, respectively. The real part of a complex number is represented by $\mbox{Re}\{\cdot\}$. The diagonal matrix is expressed by ${\rm diag}(\cdot)$ and the Euclidean norm is $\left\lVert\cdot\right\rVert$.
\section{System Model}\label{section2}
We consider a two-tier heterogeneous cellular network which consists of a macro cell tier and a small cell tier. The locations of MBSs, SBSs and users are modeled as homogeneous PPPs~$\Phi_M$, $\Phi_S$ and $\Phi_u$ with intensity $\lambda_M$,~$\lambda_S$ and~$\lambda_u$, respectively.
Each $v$BS, where $v\in\{M,S\}$, deploys~$L_v$ antennas with the total transmit power being $P_v$, while each user is equipped with one single antenna with the transmit power being $Q$.
The multiuser single connection scenario with DA is considered (see Fig. \ref{multi-user}). Note that there are $M$ MBSs and~$N$ SBSs in a certain area.
In this scenario, each BS could serve multiple users simultaneously, while each user is assumed to connect to its nearest MBS or SBS only.
Then, let $K_{m,n}$ represent the number of users whose nearest MBS is the~$m$th MBS ($\mbox{MBS}_m$) and nearest SBS is the $n$th SBS~($\mbox{SBS}_n$).
Furthermore, let $K_M$ denote the number of users whose nearest MBS is $\mbox{MBS}_m$ and~$K_S$ denote the number of users whose nearest SBS is $\mbox{SBS}_n$, we have
 \begin{eqnarray}
   K_M&=&\sum_{n=1}^NK_{m,n},\\
   K_S&=&\sum_{m=1}^MK_{m,n}.
 \end{eqnarray}

In UL or DL, there are $K_{M,m,n}^{(\cdot)}$ and $K_{S,m,n}^{(\cdot)}$ users associated to $\mbox{MBS}_m$ and $\mbox{SBS}_n$ among the~$K_{m,n}$ users, respectively, where $(\cdot)\!\!=\!D$ for DL and $(\cdot)\!\!=\!U$ for UL. Then, we have
\begin{equation}
  K_{M,m,n}^{(\cdot)}+K_{S,m,n}^{(\cdot)}= K_{m,n}.
\end{equation}
When the $k$th user in the set of $K_{m,n}$ users is associated to $\mbox{MBS}_m$ in UL or DL,
there are $K_M^{(\cdot)}$ users in total associated to $\mbox{MBS}_m$, given by
 \begin{equation}
   K_M^{(\cdot)}\!\!\!=\!K_{M,m,n}^{(\cdot)}\!\!+\!\!\displaystyle{\sum_{i=1\atop{i\neq n}}^{N}}K_{M,m,i}^{(\cdot)},\; 1\!\leq\!\! K_{M}^{(\cdot)}\!\!\leq\! K_M,K_{M}^{(\cdot)}\!\!\le\! L_M,
\end{equation}
where $K_{M,m,i}^{(\cdot)}$ denotes the number of users associated to $\mbox{MBS}_m$ while their nearest SBS is $\mbox{SBS}_i$~($i\!\!\neq \!\!n$).
Similarly, when the $k$th user is associated to $\mbox{SBS}_n$ in UL or DL, there are $K_S^{(\cdot)}$ users in total associated to $\mbox{SBS}_n$, where
\begin{equation}
  K_S^{(\cdot)}\!\!\!=\!K_{S,m,n}^{(\cdot)}\!+\!\!\sum_{j=1\atop{j\neq m}}^MK_{S,j,n}^{(\cdot)},\;1\!\le\! K_S^{(\cdot)}\!\le\! K_S,\,K_S^{(\cdot)}\!\le\! L_S,
\end{equation}
in which $K_{S,j,n}^{(\cdot)}$ is the number of users associated to $\mbox{SBS}_n$ while their nearest MBS is $\mbox{MBS}_j$~($j\!\!\neq~\!\!\!m$).

 Based on such a model, the association cases and probabilities of the $k$th user in the set of~$K_{m,n}$ users are analyzed after the derivation of the received signals in DL and UL.
\begin{figure}
\centering
\includegraphics[width=3.5in]{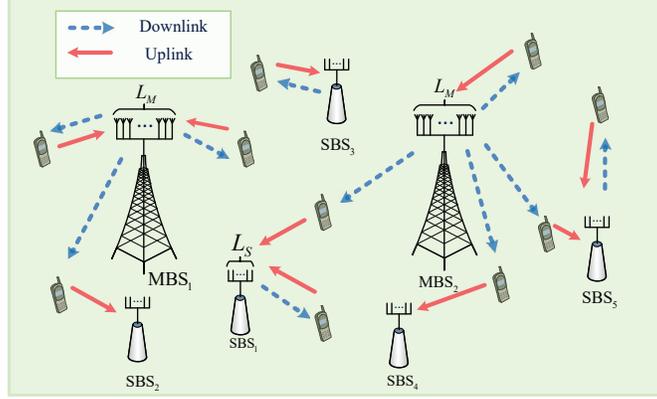}
\caption{The illustration of DA in the multiuser single connection scenario}
\label{multi-user}
\end{figure}
\subsection{Received signal in DL}
Let $\mathbf{y}_{v}^{(D)}$ denote the effective received signal from $v$BS where $v\in\{M,S\}$, i.e.,
\begin{equation}\label{1}
  \mathbf{y}_{v}^{(D)}=\mathbf{P}_{v}^{1/2}\,\mathbf{R}_{v}^{1/2}\,\mathbf{H}_{v}\mathbf{U}_{v}\mathbf{z}_{v}\in \mathbb{C}^{K_v^{(D)}\times 1}.
\end{equation}
Note that the channel matrix $\mathbf{H}_{v}\!\in\! \mathbb{C}^{K_v^{(D)}\times L_v}$ represents the channels between $v$BS and the users associated to it.
For simplicity and without loss of generality, each component of~$\mathbf{H}_{v}$ is assumed to be an independent and identical distributed (i.i.d.) complex Gaussian random variable with zero mean and unit variance.
Besides, the diagonal matrix $\mathbf{R}_{v}\in\mathbb{C}^{K_v^{(D)}\times K_v^{(D)}}$ represents the path loss. The $k$th entry in $\mathbf{R}_{v}$ is denoted as~$\left[\mathbf{R}_{v}\right]_{kk}\!\!=\!\!r_{v,k}^{-\alpha}$,
where $r_{v,k}$ is the distance from the $k$th user to its tagged $v$BS and $\alpha$ is the path loss exponent.
Assume that the total transmit power of $v$BS is equally allocated among the~$K_v^{(D)}$ users associated to it. Then, we have
\begin{equation}
 \mathbf{P}_{v}\!=\!{\rm diag}\!\left(\!\frac{P_{v}}{K_v^{(D)}},...,\frac{P_{v}}{K_v^{(D)}}\!\right)\in \mathbb{R}^{K_v^{(D)}\times K_v^{(D)}}.
\end{equation}
Furthermore, the precoded transmitted signal is denoted as $\mathbf{U}_{v}\mathbf{z}_{v}\in \mathbb{C}^{L_v\times 1}$ where~$\mathbf{U}_{v}\!\in\! \mathbb{C}^{L_{v}\!\times\! K_v^{(D)}}$ is a precoder and $\mathbf{z}_{v}$ is the $K_v^{(D)}\!\!\!\times\! 1$ data symbol vector.
Note that the~$k$th entry of $\mathbf{z}_{v}$ satisfies~$\mathbb{E}\left[\left\lvert{z}_{v,k}\right\rvert^2\right]=1$ and the term $\mathbf{U}_{v}\mathbf{z}_{v}$ is normalized for the $k$th user.
Thus, the energy constraint on the~$k$th column of $\mathbf{U}_{v}$ is obtained, i.e.,~$\mathbb{E}\left[\left\lVert\mathbf{u}_{v,k}\right\rVert^2\right]=1$, which will be used later to normalize the transmit power.

\subsection{Received signal in UL}
Let $\mathbf{x}_{v}\!\!\in\! \mathbb{C}^{K_v^{(U)}\!\times\! 1}$ represent the normalized transmitted signal across the $K_v^{(U)}$ users associated to $v$BS with $\mathbb{E}\!\left[\left\lvert{x}_{v,k}\right\rvert^2\right]\!=\!1$. Then, the received signal at $v$BS, i.e., $\mathbf{y}_{v}^{(U)}\!\!\in\! \mathbb{C}^{L_v\!\times\! 1}$, is given as
\begin{equation}\label{UL}
  \mathbf{y}_{v}^{(U)}=\mathbf{G}_{v}\mathbf{Q}_{v}^{1/2}\,\mathbf{x}_{v}+\hspace{-0.3cm}\sum_{{j\in\Phi_u\backslash \left\{\!K_v^{(U)}\!\right\}}}\hspace{-0.3cm}\sqrt{Qr_{v,j}^{-\alpha}}\mathbf{h}_{v,j}x_{j}+\mathbf{n}_{v}^{(U)}.
\end{equation}
The first term $\mathbf{G}_{v}\mathbf{Q}_{v}^{1/2}\,\mathbf{x}_{v}$ represents the received signal of~$v$BS from users associated to it where $\mathbf{G}_{v}\in~\mathbb{C}^{L_v\times K_v^{(U)}}$ is the channel matrix with each component being an i.i.d. complex Gaussian random variable with zero mean and unit variance.
Moreover, the diagonal matrix $\mathbf{Q}_{v}$ with~$\left[\mathbf{Q}_{v}\right]_{kk}=~Qr_{v,k}^{-\alpha}$ is the received signal power of~$v$BS from users associated to it.
The second term in (\ref{UL}) is the interference from all the users associated to other BSs, where~$r_{v,j}$ is the distance from~$v$BS to the $j$th user which is not associated to it.
Besides, the symbol $\mathbf{h}_{v,j}$ is an i.i.d. complex Gaussian distributed vector with zero mean and unit variance, representing the channel between $v$BS and the $j$th user.
Furthermore, the transmit signal of the $j$th user is denoted as $x_j$. Finally, the third term $\mathbf{n}_{v}^{(U)}$ is a vector of additive white Gaussian noise at $v$BS with $\mathbf{n}_{v}^{(U)}\sim \mathcal{CN}(0,\sigma^2\mathbf{I}_{L_v})$.

Next, the average received signal powers will be obtained by using the derived received signals in DL (Section \ref{section2}-A) and UL (Section \ref{section2}-B). Further, expressions for the association probabilities with CA and DA will be developed. Meanwhile, with the received signal in UL, the UL SINR for both DA and CA scenarios can be obtained. Then, based on the SINR and the association probabilities, the average spectral efficiency will be analyzed.
\section{Association Probability Analysis}\label{section3}
In this section, the UL association probabilities for both DA and CA scenarios are analyzed. It is assumed that the BSs and users have perfect channel state information. We start with calculating the average received power at the $k$th user in DL and that at its tagged BS in UL. Based on these results, the association probabilities are derived.
\subsection{Average received signal power}
In this subsection, the average received signal powers in both DL and UL are derived, which will be utilized to obtain association probabilities with CA and DA.
\subsubsection{Average received signal power in DL}
\

When the $k$th user is associated to $v$BS, there are $K_v^{(D)}$ users in total associated to $v$BS. Then, ZF precoder, given as follows, is used to eliminate the interference caused by other users associated to $v$BS,
\begin{equation}\label{ZF}
  \mathbf{W}_{v}=\mathbf{H}_{v}^H(\mathbf{H}_{v}\mathbf{H}_{v}^H)^{-1}\in\mathbb{C}^{K_v^{(D)}\times L_v}.
\end{equation}
Let $\mathbf{W}_{v}$ be written as
\begin{equation}\label{columns}
  \mathbf{W}_{v}=\left[\mathbf{w}_{v,1},...,\mathbf{w}_{v,k},...,\mathbf{w}_{v,K_v^{(D)}}\right].
\end{equation}
Then, by using the equal transmit power normalization \cite{6560005}, the normalized precoder $\mathbf{U}_{v}$ is given as
\begin{equation}\label{15}
  \mathbf{U}_{v}=\left[\frac{\mathbf{w}_{v,1}}{\left\lVert\mathbf{w}_{v,1}\right\rVert},...,\frac{\mathbf{w}_{v,k}}{\left\lVert\mathbf{w}_{v,k}\right\rVert},...,\frac{\mathbf{w}_{v,K_v^{(D)}}}{\left\lVert\mathbf{w}_{M,K_v^{(D)}}\right\rVert}\right].
\end{equation}
By substituting (\ref{15}) into (\ref{1}), the $k$th element of $\mathbf{y}_{v}^{(D)}$ is obtained, i.e.,
\begin{equation}
  {y}_{v,k}^{(D)}=\frac{P_{v}}{K_v^{(D)}}r_{v,k}^{-\alpha/2}\frac{1}{\left\lVert\mathbf{w}_{v,k}\right\rVert}x_{v,k}.
\end{equation}
Hence, the received signal power of the $k$th user associated to $v$BS, i.e., ${\rm S}_{v,k}^{(D)}$, is derived as
\begin{eqnarray}\label{17}
 {\rm S}_{v,k}^{(D)}=\frac{ P_{v}/K_v^{(D)}}{\left\lVert\mathbf{w}_{v,k}\right\rVert^2}r_{v,k}^{-\alpha}.
\end{eqnarray}
Thus, the average received signal power is given in the following lemma.
\begin{lemma}\label{lemma01}
The average received signal power is calculated~as
\begin{eqnarray}\label{SM}
 \mathbb{E}\left[{\rm S}_{v,k}^{(D)}\right]=\frac{P_{v}\left(L_v-K_v^{(D)}+1\right)}{K_v^{(D)}}r_{v,k}^{-\alpha}.
\end{eqnarray}
\end{lemma}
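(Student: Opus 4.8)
The plan is to reduce the claim to a standard distributional property of the inverse of a complex Wishart matrix. Starting from~(\ref{17}), the factors $P_v$, $K_v^{(D)}$ and $r_{v,k}^{-\alpha}$ are deterministic, so I would pull them outside the expectation, leaving only $\mathbb{E}\!\left[1/\left\lVert\mathbf{w}_{v,k}\right\rVert^2\right]$ to evaluate. Using the ZF structure~(\ref{ZF}), note that
\begin{equation}
\mathbf{W}_v^H\mathbf{W}_v=(\mathbf{H}_v\mathbf{H}_v^H)^{-1}\mathbf{H}_v\mathbf{H}_v^H(\mathbf{H}_v\mathbf{H}_v^H)^{-1}=(\mathbf{H}_v\mathbf{H}_v^H)^{-1},
\end{equation}
so that $\left\lVert\mathbf{w}_{v,k}\right\rVert^2=\left[(\mathbf{H}_v\mathbf{H}_v^H)^{-1}\right]_{kk}$. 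The target therefore becomes showing $\mathbb{E}\!\left[1/\left[(\mathbf{H}_v\mathbf{H}_v^H)^{-1}\right]_{kk}\right]=L_v-K_v^{(D)}+1$.

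To evaluate this reciprocal diagonal entry, I would partition $\mathbf{H}_v$ by isolating its $k$th row $\mathbf{h}_k^H\in\mathbb{C}^{1\times L_v}$ from the remaining $(K_v^{(D)}-1)\times L_v$ submatrix $\mathbf{H}_{\bar k}$. Applying the block-inverse (Schur complement) identity to the Hermitian matrix $\mathbf{H}_v\mathbf{H}_v^H$ yields
\begin{equation}
\frac{1}{\left[(\mathbf{H}_v\mathbf{H}_v^H)^{-1}\right]_{kk}}=\mathbf{h}_k^H\left(\mathbf{I}_{L_v}-\mathbf{H}_{\bar k}^H(\mathbf{H}_{\bar k}\mathbf{H}_{\bar k}^H)^{-1}\mathbf{H}_{\bar k}\right)\mathbf{h}_k=\mathbf{h}_k^H\mathbf{P}^\perp\mathbf{h}_k,
\end{equation}
where $\mathbf{P}^\perp$ is the orthogonal projector onto the complement of the row space of $\mathbf{H}_{\bar k}$. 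Since $\mathbf{H}_{\bar k}$ has full row rank $K_v^{(D)}-1$ almost surely, the projector $\mathbf{P}^\perp$ has rank $L_v-K_v^{(D)}+1$.

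The decisive step is to determine the distribution of this quadratic form. Conditioning on $\mathbf{H}_{\bar k}$, the matrix $\mathbf{P}^\perp$ is a fixed idempotent Hermitian projector, and because $\mathbf{h}_k\sim\mathcal{CN}(\mathbf{0},\mathbf{I}_{L_v})$ is independent of $\mathbf{H}_{\bar k}$ and unitarily invariant, rotating into the eigenbasis of $\mathbf{P}^\perp$ shows that $\mathbf{h}_k^H\mathbf{P}^\perp\mathbf{h}_k$ is a sum of $L_v-K_v^{(D)}+1$ i.i.d.\ unit-mean $\left\lvert\mathcal{CN}(0,1)\right\rvert^2$ terms. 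Hence it is Gamma-distributed with mean $L_v-K_v^{(D)}+1$, irrespective of the conditioning value, so the conditional mean equals $L_v-K_v^{(D)}+1$ and the tower property removes the conditioning. Substituting this back into the expectation of~(\ref{17}) produces~(\ref{SM}).

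I expect the distributional identity for $\mathbf{h}_k^H\mathbf{P}^\perp\mathbf{h}_k$ to be the crux of the argument; everything else is bookkeeping. An alternative that bypasses the Schur complement is to invoke directly the known fact that the reciprocal diagonal entries of an inverse complex Wishart matrix $\mathcal{W}_{K_v^{(D)}}(L_v,\mathbf{I})$ are scaled chi-squared (equivalently Gamma) variables with mean $L_v-K_v^{(D)}+1$. I nonetheless prefer the projection argument, since it is self-contained and makes the role of the excess dimension $L_v-K_v^{(D)}+1$ transparent.
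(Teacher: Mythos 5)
Your argument is correct, and it tracks the paper's proof in overall shape while filling in, from first principles, the one step the paper outsources to a citation. After the identical reduction to $\mathbb{E}\bigl[1/[(\mathbf{H}_v\mathbf{H}_v^H)^{-1}]_{kk}\bigr]$ (you even verify explicitly, via $\mathbf{W}_v^H\mathbf{W}_v=(\mathbf{H}_v\mathbf{H}_v^H)^{-1}$, the identification $\left\lVert\mathbf{w}_{v,k}\right\rVert^2=[(\mathbf{H}_v\mathbf{H}_v^H)^{-1}]_{kk}$ that the paper leaves implicit in passing from (\ref{17}) to (\ref{erlang})), the paper simply cites \cite{6082486} for the distributional fact $1/(\mathbf{H}_v\mathbf{H}_v^H)^{-1}_{kk}\sim{\rm Erlang}\bigl(L_v-K_v^{(D)}+1,1\bigr)$ and reads off its mean in (\ref{20}). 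You instead prove this fact: the Schur-complement identity correctly converts the reciprocal diagonal entry into the quadratic form $\mathbf{h}_k^H\mathbf{P}^{\perp}\mathbf{h}_k$, where $\mathbf{P}^{\perp}$ is the rank-$\bigl(L_v-K_v^{(D)}+1\bigr)$ projector onto the orthogonal complement of the row space of $\mathbf{H}_{\bar k}$, and conditioning plus unitary invariance of the Gaussian row exhibits this as a sum of $L_v-K_v^{(D)}+1$ i.i.d.\ unit-mean $\left\lvert\mathcal{CN}(0,1)\right\rvert^2$ terms, i.e., exactly the Erlang law the paper invokes. All the supporting details check out: $\mathbf{h}_k$ is independent of $\mathbf{H}_{\bar k}$ because the rows of $\mathbf{H}_v$ are i.i.d., $\mathbf{H}_{\bar k}$ has full row rank almost surely since the model enforces $K_v^{(D)}\le L_v$, and the tower property legitimately removes the conditioning because the conditional distribution does not depend on $\mathbf{H}_{\bar k}$. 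What the paper's route buys is brevity; what yours buys is self-containedness, a transparent accounting of where the excess dimension $L_v-K_v^{(D)}+1$ comes from, and in fact the full distribution rather than only the mean --- which is the standard derivation underlying the result cited from \cite{6082486} in the first place.
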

\begin{proof}
According to (\ref{17}), we have
\begin{eqnarray}\label{erlang}
 \mathbb{E}\left[{\rm S}_{v,k}^{(D)}\right]\hspace{-0.2cm}&=&\hspace{-0.2cm}\frac{P_{v}}{K_v^{(D)}}\mathbb{E}\left[\frac{1}{(\mathbf{H}_{v}\mathbf{H}_{v}^H)^{-1}_{kk}}\right]r_{v,k}^{-\alpha},
\end{eqnarray}
where the term $\frac{1}{(\mathbf{H}_{v}\mathbf{H}_{v}^H)^{-1}_{kk}}$ has an Erlang distribution with~$\frac{1}{(\mathbf{H}_{v}\mathbf{H}_{v}^H)^{-1}_{kk}}\!\!\sim\!\!\emph{Erlang}\!\left(L_v\!-\!K_v^{(D)}\!+\!1,\!1\right)$~\cite{6082486}, which means
\begin{equation}\label{20}
  \mathbb{E}\left[\frac{1}{(\mathbf{H}_{v}\mathbf{H}_{v}^H)^{-1}_{kk}}\right]=L_v-K_v^{(D)}+1.
\end{equation}
By substituting (\ref{20}) into (\ref{erlang}), the desired result is derived.
\end{proof}
Based on the average received signal power in (\ref{SM}), the UL association probabilities with CA will be derived.
\subsubsection{Average received signal power in UL}
\

Here, the received signal in UL after using linear detector is obtained. Based on the received signal, the average received signal power and SINR are derived. Then, these results will be utilized to explore the UL association probabilities for DA and analyze the spectral efficiencies with both DA and CA.

By using the ZF linear detector, the received signal of $v$BS is separated into streams, given~by
\begin{equation}\label{26}
  \mathbf{s}_{v}=\mathbf{A}_{v}^H\mathbf{y}_{v}^{(U)}\in\mathbb{C}^{K_v^{(U)}\times 1},
\end{equation}
where $\mathbf{A}_{v}\in\mathbb{C}^{L_v\times K_v^{(U)}}$ depends on the channel $\mathbf{G}_{v}$, i.e.,
\begin{equation}\label{21}
  \mathbf{A}_{v}=\mathbf{G}_{v}(\mathbf{G}_{v}^H\mathbf{G}_{v})^{-1}.
\end{equation}
By substituting (\ref{UL}) and (\ref{21}) into (\ref{26}), we derive $s_{{v},k}$ (the $k$th element of $\mathbf{s}_{v}$), which is the received signal of $v$BS from the $k$th user, as follows, where $x_{{v},k}$ is the $k$th row of $\mathbf{x}_{{v}}$,
\begin{equation}\label{UL_MBS}
  s_{{v}\!,k}\!\!=\!\!\sqrt{\!Qr_{v\!,k}^{-\alpha}}x_{{v}\!,k}\!+\hspace{-0.4cm}\!\sum_{{j\in\Phi_u\backslash \left\{\!K_v^{(U)}\!\right\}}}\hspace{-0.5cm}\sqrt{\!Qr_{{v}\!,j}^{-\alpha}}\mathbf{a}_{{v}\!,k}^H\mathbf{h}_{{v}\!,j}x_{j}\!+\mathbf{a}_{{v}\!,k}^H\mathbf{n}_{{v}}.
\end{equation}
The symbol $\mathbf{a}_{v,k}$ is the $k$th column of $\mathbf{A}_{v}$.
According to (\ref{UL_MBS}), the expectation of the received signal power ${\rm S}_{v,k}^{(U)}$ at $v$BS from the $k$th user associated to it is given by
\begin{equation}
  \mathbb{E}\left[{\rm S}_{v,k}^{(U)}\right]=Qr_{v,k}^{-\alpha}.
\end{equation}
\subsection{UL association probabilities}
Based on the derived average received signal power, the UL association probabilities are discussed here.
The $k$th user is associated to MBS in UL when
\begin{eqnarray}
 \mathbb{E}\left[{\rm S}_{M,k}^{(U)}\right]&>&\mathbb{E}\left[{\rm S}_{S,k}^{(U)}\right]\quad\mbox{with DA,}\label{eqn:1}\\
 \mathbb{E}\left[{\rm S}_{M,k}^{(D)}\right]&>&\mathbb{E}\left[{\rm S}_{S,k}^{(D)}\right]\quad\mbox{with CA.\label{eqn:2}}
\end{eqnarray}
Then, note that the distribution of $r_{v,k}$ follows from the null probability of a PPP in an area $\pi r^2$, which means there is no BS in the circle with radius $r$, we have
\begin{equation}\label{27}
  {\rm Pr}(r_{v,k}>r)=e^{-\pi\lambda_vr^2}.
\end{equation}
Based on (\ref{27}), the CDF and the corresponding PDF of $r_{v,k}$ are derived as
\begin{eqnarray}
F_{r_{v,k}}(r)&=&1-e^{-\pi\lambda_vr^2},\quad r\geq0, \label{PPP_CDF}\\
  f_{r_{v,k}}(r)&=&2\pi\lambda_vre^{-\pi\lambda_vr^2},\quad r\geq0.
\end{eqnarray}
For DA scenario, the probability of associating to the nearest $v$BS is represented as $A_v^{(U)}$, while for CA scenario, the probability that each user in the set of $K_{m,n}$ users will associate to the nearest $v$BS is represented by $A_{v,m,n}^{(D)}$. Now, the general expressions for the association probabilities are developed first. Then, several special cases are discussed.
\subsubsection{UL association probabilities with DA}
\

According to (\ref{eqn:1}), the $k$th user is associated to $\mbox{MBS}_m$ in UL with DA when
\begin{equation}
  r_{M,k}^{-\alpha}>r_{S,k}^{-\alpha}.
\end{equation}
Hence, the association probabilities are given by
\begin{eqnarray}
  A_M^{(U)} \!&=&\! {\rm Pr}\left(r_{M,k}<r_{S,k}\right)=\frac{1}{1+\lambda_S/\lambda_M}, \label{33}\\
  A_S^{(U)}\! &=& 1-A_M^{(U)}=\frac{\lambda_S/\lambda_M}{1+\lambda_S/\lambda_M}.\label{34}
\end{eqnarray}

It is quite insightful to see that the association probabilities for every user in the area are identical and independent of the numbers of antennas and the transmit powers of BSs.
In other words, they only depend on the densities of BSs.
Specifically, increasing $\lambda_S$ leads to higher~$A_M^{(U)}$, which means more SBSs deployed in the coverage area of a MBS leads to a higher probability of associating to SBSs. This phenomenon can be explained as follows.
Cell associations of users with DA are based on the maximum received signal power in UL, that is, each user is associated to its nearest BS.
Increasing $\lambda_S$ leads to the decreasing distance between SBSs and users, resulting in more opportunities for users to associate to SBSs.
Besides, it is worth noting that the association probabilities with DA for the multiuser scenario are the same as those in~\cite{6736745} for the single user scenario.
\subsubsection{UL association probabilities with CA}
\

According to (\ref{eqn:2}) and Lemma \ref{lemma01}, the $k$th user is associated to $\mbox{MBS}_m$ in UL with CA when
\begin{eqnarray}\label{33_1}
  \frac{P_{M}\left(L_M\!-\!K_M^{(D)}\!+\!1\right)}{K_M^{(D)}}r_{M,k}^{-\alpha}\!>\!\frac{ P_S\left(L_S\!-\!K_S^{(D)}\!+\!1\right)}{K_S^{(D)}}r_{S,k}^{-\alpha},
\end{eqnarray}
which means the UL association probabilities are equal to the association probabilities in DL.
Note that $K_M^{(D)}$ and~$K_S^{(D)}$ are the total numbers of users associated to $\mbox{MBS}_m$ and $\mbox{SBS}_n$ when the $k$th user is associated to $\mbox{MBS}_m$ and $\mbox{SBS}_n$, respectively. However, there exists two major problems. One is that the $k$th user do not know the accurate values of $K_M^{(D)}$ and~$K_S^{(D)}$, the other is that the $k$th user being added to the system might have an impact on the associations of other users, making it impossible to derive association probabilities for each user. To address the issues, we replace $K_M^{(D)}$ and $K_S^{(D)}$  with the expectations of them, which means the $k$th user knows the average numbers of users associated to $\mbox{MBS}_m$ and $\mbox{SBS}_n$ according to the number and association probabilities of users in each set. The expressions are given by
\begin{eqnarray}
    K_M^{(D)}\hspace{-0.3cm}&=&\hspace{-0.3cm}(K_{m,n}-1)A_{M,m,n}^{(D)}+1+\displaystyle{\sum_{i=1,\atop{i\neq n}}^{N}}K_{m,i}A_{M,m,i}^{(D)},\label{38}\\
     K_S^{(D)}\hspace{-0.3cm}&=&\hspace{-0.3cm}(K_{m,n}-1)A_{S,m,n}^{(D)}+1+\displaystyle{\sum_{j=1,\atop{j\neq m}}^{M}}K_{j,n}A_{S,j,n}^{(D)}.\label{39}
\end{eqnarray}
Then, the association probabilities are derived in the following~theorem.
\begin{theorem}\label{theorem1}
The association probabilities with CA are calculated as\footnote{Since the function $F(A_{M,m,n}^{(D)})=1/\left(1+C_{m,n}^2\lambda_S/\lambda_M\right)-A_{M,m,n}^{(D)}$ is monotonically decreasing in the interval $[0,1]$ with~$F(0)>0$ and $F(1)<~0$, there exits only one root. Therefore, the association probabilities can be derived by solving non-linear equations iteratively with the initial values set as 0 or 1.}
\begin{eqnarray}
  A_{M,m,n}^{(D)} &=&\frac{1}{1+C_{m,n}^2\lambda_S/\lambda_M},\label{36_1}\\
   A_{S,m,n}^{(D)}&=&\frac{C_{m,n}^2\lambda_S/\lambda_M}{1+C_{m,n}^2\lambda_S/\lambda_M},\label{36_2}
\end{eqnarray}
where the constant $C_{m,n}$ is
\begin{equation}\label{37}
    C_{m,n}=\left(\frac{ K_M^{(D)}P_S\left(L_S-K_S^{(D)}+1\right)}{K_S^{(D)}P_M\left(L_M-K_M^{(D)}+1\right)}\right)^{1/\alpha}.
\end{equation}
\end{theorem}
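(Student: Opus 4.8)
The plan is to translate the power-based association criterion (\ref{33_1}) into a simple comparison of the two nearest-BS distances and then integrate against their distributions. First I would divide both sides of (\ref{33_1}) by $r_{S,k}^{-\alpha}$ and by the MBS coefficient and invoke the definition of $C_{m,n}$ in (\ref{37}), rewriting the event ``the $k$th user associates to $\mbox{MBS}_m$'' as $(r_{S,k}/r_{M,k})^{\alpha}>C_{m,n}^{\alpha}$. Since $\alpha>0$, the $\alpha$-th root preserves the inequality, so this event is equivalent to $r_{M,k}<r_{S,k}/C_{m,n}$. This step is what makes the result tractable: all the antenna numbers and transmit powers collapse into the single scaling constant $C_{m,n}$, leaving a purely geometric condition.

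Because $\Phi_M$ and $\Phi_S$ are independent PPPs, the distances $r_{M,k}$ and $r_{S,k}$ are independent, with marginal CDF (\ref{PPP_CDF}) and the PDF given immediately after it. I would then condition on $r_{S,k}=y$ and write
\begin{equation}
A_{M,m,n}^{(D)}=\int_0^\infty F_{r_{M,k}}\!\left(\frac{y}{C_{m,n}}\right)f_{r_{S,k}}(y)\,dy.
\end{equation}
Substituting the CDF and PDF, the constant part integrates the density of $r_{S,k}$ and contributes $1$, while the remaining term is a Gaussian-type integral $\int_0^\infty 2\pi\lambda_S y\,e^{-\pi(\lambda_S+\lambda_M/C_{m,n}^2)y^2}\,dy=\lambda_S/(\lambda_S+\lambda_M/C_{m,n}^2)$. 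Subtracting, then dividing numerator and denominator by $\lambda_M/C_{m,n}^2$, gives (\ref{36_1}); the complement (\ref{36_2}) follows from $A_{S,m,n}^{(D)}=1-A_{M,m,n}^{(D)}$.

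I expect the main difficulty to be conceptual rather than computational. The constant $C_{m,n}$ in (\ref{37}) is assembled from $K_M^{(D)}$ and $K_S^{(D)}$, which through (\ref{38}) and (\ref{39}) depend on the very probabilities $A_{M,m,n}^{(D)}$ and $A_{S,m,n}^{(D)}$ being computed. Consequently (\ref{36_1}) is not an explicit formula but a fixed-point relation, and treating $C_{m,n}$ as a constant is only a legitimate device for the integration step. To justify that this fixed point is well defined I would argue exactly as in the footnote: the map $F(A_{M,m,n}^{(D)})=1/(1+C_{m,n}^2\lambda_S/\lambda_M)-A_{M,m,n}^{(D)}$ is continuous and strictly decreasing on $[0,1]$ with $F(0)>0$ and $F(1)<0$, so a unique root exists and can be obtained by iterating from either endpoint. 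The integral computation above therefore delivers the stated expressions once this self-consistency is imposed.
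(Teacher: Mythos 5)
Your proposal is correct and takes essentially the same route as the paper: both reduce the criterion (\ref{33_1}) to the geometric event $r_{S,k}>C_{m,n}r_{M,k}$ via the definition of $C_{m,n}$ and then compute a single Gaussian-type integral against the PPP nearest-distance laws, the only cosmetic difference being that you condition on $r_{S,k}$ and integrate $F_{r_{M,k}}(y/C_{m,n})f_{r_{S,k}}(y)$ while the paper conditions on $r_{M,k}$ and integrates $\left(1-F_{r_{S,k}}(C_{m,n}r_{M,k})\right)f_{r_{M,k}}(r_{M,k})$, which yields the identical result $1/\left(1+C_{m,n}^2\lambda_S/\lambda_M\right)$. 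Your explicit discussion of the fixed-point nature of $C_{m,n}$ through (\ref{38}) and (\ref{39}) simply makes rigorous what the paper relegates to its footnote, so nothing is missing.
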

\begin{proof}
From (\ref{33_1}), the probability of associating to MBS can be derived as
\begin{eqnarray}\label{40}
  A_{M,m,n}^{(D)} \!\hspace{-0.4cm}&=&\! \hspace{-0.3cm}{\rm Pr}\!\!\left(\!\!\frac{P_M\!\!\left(\!\!L_M\!\!-\!\!K_M^{(D)}\!\!\!+\!\!1\!\right)}{K_M^{(D)}}\!r_{M\!,k}^{-\alpha}\!\!>\!\!\frac{ P_S\!\!\left(\!L_S\!\!-\!\!K_S^{(D)}\!\!\!+\!\!1\!\right)}{K_S^{(D)}}r_{S\!,k}^{-\alpha}\!\right) \notag\\
  &=&\hspace{-0.3cm}\int_0^{+\infty}\hspace{-0.3cm}\!\!(1\!\!-\!F_{r_{S\!,k}}\!(C_{m,n}r_{M\!,k}))f_{r_{M\!,k}}\!(r_{M\!,k}){\rm d}r_{M\!,k}=\frac{1}{1+C_{m,n}^2\lambda_S/\lambda_M},
\end{eqnarray}
Then, the probability that the $k$th user will be associated to SBS in DL is obtained as
  \begin{equation}
    A_{S,m,n}^{(D)}=1-A_{M,m,n}^{(D)}=\frac{C_{m,n}^2\lambda_S/\lambda_M}{1+C_{m,n}^2\lambda_S/\lambda_M}.
  \end{equation}
\end{proof}

From (\ref{33}) and (\ref{36_1}), it can be noted that the difference between the expressions for association probabilities with DA and CA is the coefficient $C_{m,n}^2$ in (\ref{36_1}). Therefore, besides the densities of BSs, parameters in $C_{m,n}$, shown in (\ref{37}), including the transmit powers as well as the numbers of antennas and users associated to BSs, also have an impact on the association probabilities with CA.

Note that the increase of density of SBSs, i.e., $\lambda_S$, leads to the growth of $\lambda_S/\lambda_M$, which means more SBSs are deployed in the coverage area of a MBS, resulting in higher received powers of users associated to SBSs. Hence, for CA scenario where the association is based on the maximum
DL received power, users are less likely to associate to MBSs, in other words, association probability $A_{M,m,n}^{(D)}$ in (\ref{40}) decreases.
Moreover, decreasing probability of associating to MBS leads to the decreasing $K_M^{(D)}$ and the raising $K_S^{(D)}$. As a result, the coefficient $C_{m,n}$ decreases and $A_{M,m,n}^{(D)}$ grows to some extent. Therefore, the negative feedback enable $A_{M,m,n}^{(D)}$ to converge to a stable value which decreases with the increase of $\lambda_S$.
Similarly, the growth of $P_M$ results in the decrease of $C_{m,n}$ and the increase of $A_{M,m,n}^{(D)}$  which converges to a stable value due to the negative feedback of  $C_{m,n}$. In other words, the raising transmit power of MBSs results in higher received powers of users associated to MBSs, hence, users are more likely to associate to~MBSs.

Next, consider a special case when each BS serves at most one user. Then, we obtain several interesting insights as following corollaries.
\begin{corollary}\label{corollary1}
Since the association probabilities of each user with CA are identical when each BS serves at most one user, the symbols $A_{M,m,n}^{(D)}$ and $A_{S,m,n}^{(D)}$ can be replaced by $A_{M}^{(D)}$ and $A_{S}^{(D)}$, respectively. Then, the closed-form association probabilities are given by
\begin{eqnarray}
  A_{M}^{(D)} &=&\frac{1}{1+\left(\frac{P_SL_S}{P_ML_M}\right)^{\frac{2}{\alpha}}\frac{\lambda_S}{\lambda_M}},\label{41}\\
   A_{S}^{(D)}&=&\frac{\left(\frac{P_SL_S}{P_ML_M}\right)^{\frac{2}{\alpha}}\frac{\lambda_S}{\lambda_M}}{1+\left(\frac{P_SL_S}{P_ML_M}\right)^{\frac{2}{\alpha}}\frac{\lambda_S}{\lambda_M}}.\label{42}
\end{eqnarray}
\end{corollary}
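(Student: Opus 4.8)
The plan is to specialize Theorem~\ref{theorem1} to the regime in which each BS serves at most one user, so that the only remaining work is to evaluate the coefficient $C_{m,n}$ of~(\ref{37}) in this limit. The crucial observation is that the assumption ``each BS serves at most one user'' forces the per-BS user counts to collapse: when the $k$th user is associated to $\mbox{MBS}_m$, no other user can be associated to $\mbox{MBS}_m$, so $K_M^{(D)}=1$, and likewise $K_S^{(D)}=1$ whenever a user is tagged to $\mbox{SBS}_n$. First I would verify this directly from the defining relations~(\ref{38}) and~(\ref{39}): every sum over the other sets of users vanishes because those BSs carry no additional load, leaving only the $+1$ contributed by the tagged user itself.

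Next I would substitute $K_M^{(D)}=K_S^{(D)}=1$ into the expression~(\ref{37}) for $C_{m,n}$. The factors $L_M-K_M^{(D)}+1$ and $L_S-K_S^{(D)}+1$ reduce to $L_M$ and $L_S$, while the prefactors $K_M^{(D)}$ and $K_S^{(D)}$ become unity, so that
\begin{equation*}
    C_{m,n}=\left(\frac{P_S L_S}{P_M L_M}\right)^{1/\alpha}.
\end{equation*}
Because this value no longer depends on the indices $m,n$ (nor on any random user count), it is a single universal constant shared by every user in the network. Consequently the association probabilities~(\ref{36_1}) and~(\ref{36_2}) become identical across all users, which is precisely what justifies dropping the $m,n$ subscripts and writing $A_M^{(D)}$ and $A_S^{(D)}$ as in the corollary statement.

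Finally I would insert $C_{m,n}^2=\left(P_S L_S/P_M L_M\right)^{2/\alpha}$ into~(\ref{36_1}) and~(\ref{36_2}) to recover the claimed closed forms~(\ref{41}) and~(\ref{42}) verbatim. The algebra itself is routine; the only step requiring genuine care is the first one, namely arguing that the self-referential system~(\ref{38})--(\ref{39}) — which in the general multiuser case must be solved iteratively via the fixed-point argument described in the footnote to Theorem~\ref{theorem1} — degenerates here to the deterministic values $K_M^{(D)}=K_S^{(D)}=1$. Once that decoupling is established, no fixed-point iteration is needed and the result is fully explicit.
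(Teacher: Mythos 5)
Your proposal is correct and follows essentially the same route as the paper's own proof: you deduce $K_M^{(D)}=K_S^{(D)}=1$ from the defining relations (\ref{38})--(\ref{39}) under the single-user-per-BS assumption (where $K_{m,n}=1$, $K_{m,i}=0$ for $i\neq n$, and $K_{j,n}=0$ for $j\neq m$), obtain $C_{m,n}=\left(P_SL_S/P_ML_M\right)^{1/\alpha}$ from (\ref{37}), and substitute into (\ref{36_1})--(\ref{36_2}). Your added remark that the resulting $C_{m,n}$ is index-independent, which justifies dropping the $m,n$ subscripts, is a point the paper asserts in the corollary statement without spelling out, so your write-up is, if anything, slightly more complete.
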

\begin{proof}
If each BS serves at most one user, we have~$K_{m,i}\!=\!0$ where $i$ varies from 1 to~$N$ with $i\neq n$ as well as $K_{j,n}\!=\!0$ where $j$ varies from 1 to $M$ with $j\neq m$. Moreover, note that~$K_{m,n}\!=\!1$. Hence, it is obtained from~(\ref{38}) and (\ref{39}) that $K_M^{(D)}=1$ and $K_S^{(D)}=1$. Then, according to~(\ref{37}), we have $C_{m,n}=\left(\frac{P_SL_S}{P_ML_M}\right)^{\frac{1}{\alpha}}$. By substituting $C_{m,n}$ into (\ref{36_1}) and (\ref{36_2}), the desired results is derived.
\end{proof}
\begin{remark}
Besides the densities of BSs, the association probabilities also depend on the products of the transmit powers and antennas of BSs.
The raising $P_ML_M$, i.e., the increasing transmit power and/or antennas of MBSs, results in the decreasing $C_{m,n}$.
Hence, the association probability~$A_{M}^{(D)}$ increases and $A_{S}^{(D)}$ decreases.
This result reveals that, different from the single antenna scenario, \textbf{array gain} is achieved by using multiple antennas at BSs.
Therefore, more antennas could be equipped in BSs to linearly increase the effective received signal powers at~users.
\end{remark}
Moreover, according to Corollary \ref{corollary1}, note that when the number of antennas equipped in each MBS is the same as that equipped in each SBS, including the case of $L_M=L_S=1$, we have~$\frac{L_S}{L_M}=1$, plugging which into (\ref{41}) and (\ref{42}) derives the association probabilities that are the same as those in \cite{6736745} for the single user scenario.
\begin{corollary}\label{corollary3}
When $P_SL_S\!\!=\!\!P_ML_M$, especially when the transmit power and the number of antennas of MBSs are the same as those of SBSs, the two-tier heterogeneous network becomes a homogeneous network. Then, the association probabilities with CA in (\ref{41}) and (\ref{42}) are identical to those with DA in~(\ref{33}) and (\ref{34}).
\end{corollary}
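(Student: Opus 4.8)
The plan is to specialize the closed-form CA association probabilities of Corollary~\ref{corollary1} under the hypothesis $P_S L_S = P_M L_M$ and verify that they coincide term-by-term with the DA probabilities in (\ref{33}) and (\ref{34}). Since the corollary lives in the regime where each BS serves at most one user, equations (\ref{41}) and (\ref{42}) are already available, so no new distributional computation is required and the argument is purely algebraic.

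First I would note that the sole parameter distinguishing (\ref{41})--(\ref{42}) from (\ref{33})--(\ref{34}) is the factor $\left(\frac{P_S L_S}{P_M L_M}\right)^{2/\alpha}$ multiplying $\lambda_S/\lambda_M$. The hypothesis $P_S L_S = P_M L_M$ forces $\frac{P_S L_S}{P_M L_M} = 1$, so this factor equals $1^{2/\alpha} = 1$ for every path-loss exponent $\alpha$. Substituting into (\ref{41}) collapses the denominator to $1 + \lambda_S/\lambda_M$, which is exactly $A_M^{(U)}$ in (\ref{33}); the identical substitution in (\ref{42}) reproduces $A_S^{(U)}$ in (\ref{34}).

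The one point deserving a line of justification is why comparing the DL-indexed CA probabilities with the UL-indexed DA probabilities is legitimate. This is precisely the observation recorded after (\ref{33_1}): for the CA policy the UL association is inherited from the DL association, so $A_{M,m,n}^{(D)}$ and its single-user reduction $A_M^{(D)}$ are simultaneously the UL CA probabilities. With this identification the two expressions are compared on the same footing.

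I expect no genuinely hard step: once Corollary~\ref{corollary1} is in hand the result is a one-line substitution, so the ``obstacle'' is really only bookkeeping of superscripts. To make the ``homogeneous network'' interpretation precise, I would add that, by Lemma~\ref{lemma01} with $K_v^{(D)}=1$, the effective DL received power reduces to $P_v L_v r_{v,k}^{-\alpha}$, which under $P_M L_M = P_S L_S$ depends on the tier $v$ only through the distance $r_{v,k}$. The two tiers are therefore statistically indistinguishable and their superposition is a single homogeneous PPP of intensity $\lambda_M + \lambda_S$; the coincidence of the probabilities is then a consequence of this indistinguishability rather than an accident of the formulas.
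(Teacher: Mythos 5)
Your proposal is correct and matches the paper's (implicit) argument: the paper states Corollary~\ref{corollary3} without a formal proof precisely because, as you show, it reduces to setting $C_{m,n}=\left(\frac{P_SL_S}{P_ML_M}\right)^{1/\alpha}=1$ in Corollary~\ref{corollary1} and invoking the remark after (\ref{33_1}) that UL and DL association probabilities coincide under CA. Your closing observation that the two tiers become statistically indistinguishable, so their superposition is a homogeneous PPP of intensity $\lambda_M+\lambda_S$, is the same interpretation the paper gives in the discussion following the corollary.
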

Corollary \ref{corollary3} implies that when $P_SL_S\!=\!P_ML_M$ in the single user scenario, there exists no difference between MBSs and SBSs from the users' viewpoint.
Hence, the two-tier heterogeneous network becomes a homogeneous network where the cell association policies for CA and DA are identical.
The reason is that since $P_SL_S\!=\!P_ML_M$, the average received signal power in DL only depends on the distance between one user and its nearest $v$BS, thus, the association strategy of DL is the same as that of UL.
Therefore, the UL association probabilities with DA which depend on the UL received power are the same as those with CA which depend on the DL received power. Hence, DA is not necessary in homogeneous~networks.

With the derived association probabilities, the average spectral efficiencies with DA and CA will be analyzed in the following section.
\section{Uplink Spectral Efficiency Analysis}\label{section4}
In this section, the average UL spectral efficiencies with DA and CA are analyzed based on the UL SINR, where we first analyze the average spectral efficiency for each association case and then derive the final results averaged over all association cases.
Due to the difficulty of obtaining the exact expressions for the average spectral efficiency, we derive the lower bounds on them instead.
\subsection{UL Spectral efficiency with DA}
Here, spectral efficiency with DA is considered.
Since the spectral efficiency is given as
\begin{equation}\label{S_eff}
  {\rm SE} \triangleq \mathbb{E}\left[{\rm ln(1+SINR)}\right],
\end{equation}
the UL SINR is required first.
According to (\ref{UL_MBS}), the UL SINR at $v$BS with DA is obtained as
\begin{equation}\label{SINR_M^{(U)}}
  {\rm SINR}=\frac{Qr_{v,k}^{-\alpha}}{I_v^{(U)}+\left\lVert\mathbf{a}_{v,k}\right\rVert^2\sigma^2}.
\end{equation}
Since ZF detector is used, the interference from other users associated to $v$BS is eliminated. Then, the cumulative interference from the users associated to other BSs is given as
\begin{equation}\label{I_r}
  I_v^{(U)}=\sum_{{j\in\Phi_u\backslash \left\{\!K_v^{(U)}\!\right\}}}Qr_{v,j}^{-\alpha}\left\lvert\mathbf{a}_{v,k}^H\mathbf{h}_{v,j}\right\rvert^2.
\end{equation}
Hence, substituting (\ref{SINR_M^{(U)}}) into (\ref{S_eff}) obtains ${\rm SE}_{v,m,n}^{({\rm DA})}$, which is the spectral efficiency of the $k$th user associated to $v$BS,
\begin{eqnarray}\label{tauDA}
  {\rm SE}_{v,m,n}^{({\rm DA})}\!\! =\!\!\mathbb{E}_{{I_v^{(U)}\!,\mathbf{a}_{v\!,k},r_{v\!,k}}}\!\!\!\left[{\rm ln}\!\left(\!1\!\!+\!\frac{Qr_{v\!,k}^{-\alpha}}{I_v^{(U)}\!\!+\!\!\left\lVert\mathbf{a}_{v\!,k}\right\rVert^2\!\!\sigma^2}\!\!\right)\!\right]\!\!.
\end{eqnarray}
Note that the exact expression for ${\rm SE}_{v,m,n}^{({\rm DA})}$ should be derived by computing the expectation with respect to $I_v^{(U)}$,~$\mathbf{a}_{v\!,k}$ and~$r_{v,k}$.
Thus, a triple integral is required, which is extremely complicated. Moreover, the variables are in different parts of the fraction in the logarithmic function, leading to the difficulty in obtaining an exact expression for ${\rm SE}_{v,m,n}^{({\rm DA})}$.
Therefore, a lower bound on~${\rm SE}_{v,m,n}^{({\rm DA})}$ is obtained to replace the exact expression. Since the function~$\phi(x)\!\!=~\!{\rm ln}(1+\frac{1}{x})$ is concave, by using Jensen's inequality, the lower bound on~${\rm SE}_{v,m,n}^{({\rm DA})} $ can be derived as follows,
\begin{equation}\label{inequ}
  {\rm SE}_{v\!,m,n}^{({\rm DA})} \geq\mathbb{E}_{r_{v\!,k}}\left[{\rm ln}\left(1+\frac{Qr_{v,k}^{-\alpha}}{\gamma_{{v}}^{({\rm DA})}}\right)\right],
\end{equation}
where $\gamma_{{v}}^{({\rm DA})}$ is the expectation of interference plus noise, i.e.,
\begin{equation}
  \gamma_{{v}}^{({\rm DA})}=\mathbb{E}_{\mathbf{a}_{v\!,k},r_{v\!,k}}\left[I_v^{(U)}\!\!+\!\!\left\lVert\mathbf{a}_{v\!,k}\right\rVert^2\!\!\sigma^2\right].
\end{equation}
The expression for $\gamma_{{v}}^{({\rm DA})}$ is given in the following lemma.
\begin{lemma}\label{lemma1}
The expression for $\gamma_{{v}}^{({\rm DA})}$ is developed as
\begin{eqnarray}\label{INR51}
  \gamma_{{v}}^{({\rm DA})} \hspace{-0.3cm}  &\triangleq& \hspace{-0.3cm} \frac{Q\overline{r}_{{v}}\!+\!\kappa_v^{(U)}Q\overline{r}_{{v},1}^{({\rm DA})}\!+\!\sigma^2}{L_v-K_v^{(U)}},
\end{eqnarray}
where $\kappa_v^{(U)}=\sum_{n=1}^N K_{S,m,n}^{(U)}$ for $v=M$ and $\kappa_v^{(U)}=K_{M,m,n}^{(U)}$ for $v=S$.
The term $Q\overline{r}_{{v}}$ is the sum of interference from users whose nearest $v$BS is not $\mbox{MBS}_m$ ($\mbox{SBS}_n$) with
\begin{eqnarray}\label{47}
  \overline{r}_{{v}}\triangleq\int_{r_p}^{R}\hspace{-0.2cm}r^{-\!\alpha} 2\pi\lambda_ur(e^{-\lambda_v\pi r_p^2}\!-\!e^{-\lambda_v\pi r^2})e^{-\lambda_w\pi r_p^2}{\rm d}r,
\end{eqnarray}
where $w\in \{M,S\}$ and $w\neq v$.
The term $Q\overline{r}_{{v},1}^{({\rm DA})}$ represents the interference from one user whose nearest $v$BS is $\mbox{MBS}_m$~($\mbox{SBS}_n$) but are associated to $\mbox{SBS}$ ($\mbox{MBS}$) with
\begin{equation}\label{48}
   \overline{r}_{{v},1}^{({\rm DA})}\!\!\triangleq\!\!\frac{2\pi\lambda_v(\lambda_M\!\!+\!\!\lambda_S)}{\lambda_w}\!\!\!\int_{r_p}^{R}\!\!r^{\!-\!\alpha\!+\!1}e^{\!-\!\lambda_v\pi r^2}\!\!\!\left(e^{\!-\!\lambda_w\pi r_p^2}\!-\!e^{\!-\!\lambda_w\pi r^2}\!\right)\!\!{\rm d}r.
\end{equation}
 \end{lemma}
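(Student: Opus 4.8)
The plan is to decompose $\gamma_{v}^{({\rm DA})}$ into its noise and interference contributions, $\gamma_{v}^{({\rm DA})}=\mathbb{E}[I_v^{(U)}]+\sigma^2\,\mathbb{E}[\|\mathbf{a}_{v,k}\|^2]$, and to evaluate the two pieces separately. For the noise piece I would first note that, since $\mathbf{A}_{v}=\mathbf{G}_{v}(\mathbf{G}_{v}^H\mathbf{G}_{v})^{-1}$, the $k$th column obeys $\|\mathbf{a}_{v,k}\|^2=[\mathbf{A}_{v}^H\mathbf{A}_{v}]_{kk}=[(\mathbf{G}_{v}^H\mathbf{G}_{v})^{-1}]_{kk}$. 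As $\mathbf{G}_{v}^H\mathbf{G}_{v}$ is a complex Wishart matrix, the same distributional fact invoked in the proof of Lemma~\ref{lemma01} applies: $[(\mathbf{G}_{v}^H\mathbf{G}_{v})^{-1}]_{kk}$ is the reciprocal of an Erlang$(L_v-K_v^{(U)}+1,1)$ random variable, whose mean is $1/(L_v-K_v^{(U)})$. This single computation both produces the noise term $\sigma^2/(L_v-K_v^{(U)})$ and fixes the common denominator of (\ref{INR51}).

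For the interference piece I would integrate out the small-scale fading first. Each interfering channel $\mathbf{h}_{v,j}$ is i.i.d.\ $\mathcal{CN}(0,\mathbf{I})$ and, because user $j$ is not associated to $v$BS, is independent of $\mathbf{a}_{v,k}$; conditioning on $\mathbf{a}_{v,k}$ therefore gives $\mathbb{E}_{\mathbf{h}_{v,j}}[|\mathbf{a}_{v,k}^H\mathbf{h}_{v,j}|^2]=\|\mathbf{a}_{v,k}\|^2$. Combining this with the norm mean above reduces the interference to $\mathbb{E}[I_v^{(U)}]=\frac{Q}{L_v-K_v^{(U)}}\,\mathbb{E}\!\big[\sum_{j}r_{v,j}^{-\alpha}\big]$, so the whole problem collapses to computing the expected aggregate path loss from the out-of-cell interferers.

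The next step is to partition those interferers according to the decoupled nearest-BS rule, exactly as the statement anticipates. The first class consists of users whose nearest $v$BS is some station other than the tagged $\mbox{MBS}_m$ ($\mbox{SBS}_n$); the second class consists of the $\kappa_v^{(U)}$ users whose nearest $v$BS \emph{is} the tagged one but which, lying even closer to a station of the other tier $w$, associate there and so still interfere in UL. I would then apply Campbell's theorem to the user PPP, weighting each annulus $2\pi\lambda_u r\,{\rm d}r$ by the path loss $r^{-\alpha}$ and by the association and exclusion probabilities built from the null probabilities $e^{-\lambda_v\pi r^2}$ and $e^{-\lambda_w\pi r^2}$ of the two BS processes over the annular region $[r_p,R]$. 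The first class then yields $Q\overline{r}_{v}$ with $\overline{r}_{v}$ as in (\ref{47}), while each cross-tier interferer contributes $Q\overline{r}_{v,1}^{({\rm DA})}$ with $\overline{r}_{v,1}^{({\rm DA})}$ as in (\ref{48}); multiplying the latter by the count $\kappa_v^{(U)}$ and adding the two classes together with the noise term assembles the numerator of (\ref{INR51}).

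I expect the main obstacle to lie in this last, spatial step: translating the decoupled association constraints into the correct conditional distance laws for the two interferer classes, so that the guard radius $r_p$, the network radius $R$, and the precise products of exponential factors in (\ref{47}) and (\ref{48}) emerge rather than generic versions of them. Particular care is needed to verify that the two classes are disjoint and together exhaust all out-of-cell interferers, and that the cross-tier users are counted as the finite set of size $\kappa_v^{(U)}$ with their own conditional distance distribution, rather than being swept into the PPP integral, since each of them shares the tagged $v$BS as its nearest same-tier station.
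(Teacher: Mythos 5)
Your proposal is correct and follows essentially the same route as the paper's Appendix~\ref{apx:A}: the same decomposition into noise and out-of-cell interference, the same use of the fading statistics (independence of $\mathbf{h}_{v,j}$ from $\mathbf{a}_{v,k}$) to factor out $\mathbb{E}\left[\left\lVert\mathbf{a}_{v,k}\right\rVert^2\right]$, the same two-class partition of interferers with the $\kappa_v^{(U)}$ cross-tier users handled as a finite set through conditional distance distributions normalized by the association probability $A_w^{(U)}$ (which is exactly where the prefactor $(\lambda_M+\lambda_S)/\lambda_w$ in (\ref{48}) comes from), and the same PPP annulus integrals over $[r_p,R]$. The only immaterial difference is that you evaluate $\mathbb{E}\left[\left\lVert\mathbf{a}_{v,k}\right\rVert^2\right]$ via the mean of the reciprocal-Erlang diagonal entry $\left[(\mathbf{G}_v^H\mathbf{G}_v)^{-1}\right]_{kk}$, whereas the paper averages $\tfrac{1}{K_v^{(U)}}\mathbb{E}\left[{\rm tr}\left((\mathbf{G}_v^H\mathbf{G}_v)^{-1}\right)\right]$ using the Wishart trace identity; both yield $1/(L_v-K_v^{(U)})$.
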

\begin{proof}
The proof is given in Appendix \ref{apx:A}.
\end{proof}
Note that as shown in (\ref{47}) and (\ref{48}), a circular area with radius $R$ is considered, the origin of which is $v$BS.
The users in this area are considered as interference while interference from users out of the area can be ignored.
Moreover, a protective area is considered, in which there are no users. The radius of the protective area is denoted as $r_p$.

Finally, combining (\ref{inequ})-(\ref{48}), the lower bound on the spectral efficiency of $K_v^{(U)}$ users associated to $v$BS is obtained and shown in the following lemma.
\begin{lemma}\label{lemma2}
The spectral efficiency of users associated to $v$BS, i.e., ${\rm SE}_{v,m,n}^{({\rm DA})}$, is lower bounded as
\begin{equation}\label{51M}
{\rm SE}_{v,m,n}^{({\rm DA})}\!\! \geq\!\!  {{\rm ln}\!\left(\!1\!+\!\frac{Q}{\gamma_{{v}}^{({\rm DA})}r_p^{\alpha}}\!\right)}\! {\rm exp}(\!-\pi(\!\lambda_M\!+\!\lambda_S\!)r_p^2)-\!\!\!\int_0^{{\rm ln}\left(\!1+\frac{Q}{\gamma_{{v}}^{({\rm DA})}r_p^{\alpha}}\!\right)}\!\!{\rm exp}\!\!\left(\!\!\!-(\!\lambda_M\!\!+\!\!\lambda_S)\pi\!\!\left(\!\!\frac{Q}{(e^t\!-\!1)\gamma_{{v}}^{{({\rm DA})}}\!}\!\!\right)^\frac{2}{\alpha}\!\right)\!\!{\rm d}t,
\end{equation}
where $\gamma_{{v}}^{({\rm DA})}$ is as defined in Lemma \ref{lemma1}.
\end{lemma}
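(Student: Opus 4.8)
The plan is to start from the Jensen's-inequality bound (\ref{inequ}), namely ${\rm SE}_{v,m,n}^{({\rm DA})}\geq \mathbb{E}_{r_{v,k}}[\ln(1+Qr_{v,k}^{-\alpha}/\gamma_v^{({\rm DA})})]$, and to reduce the problem to evaluating (a lower bound on) this single remaining expectation over the serving distance $r_{v,k}$. The first step is to pin down the law of $r_{v,k}$. Under DA each user attaches to the nearest BS of the superposed process $\Phi_M\cup\Phi_S$, which is again a PPP, now of intensity $\lambda_M+\lambda_S$; moreover the type (M or S) of the nearest point is a location-independent thinning with probabilities $\lambda_v/(\lambda_M+\lambda_S)$, so conditioning on ``associated to $v$BS'' leaves the serving-distance law unchanged. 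Hence $r_{v,k}$ has survival function $S(r)=e^{-\pi(\lambda_M+\lambda_S)r^2}$ and density $f(r)=2\pi(\lambda_M+\lambda_S)re^{-\pi(\lambda_M+\lambda_S)r^2}$, i.e. exactly (\ref{27})--(\ref{PPP_CDF}) with $\lambda_v$ replaced by $\lambda_M+\lambda_S$.

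Next I would bring in the protective radius $r_p$. Writing $g(r):=\ln(1+Qr^{-\alpha}/\gamma_v^{({\rm DA})})$, the integrand is nonnegative, so discarding the near-field contribution $r_{v,k}<r_p$ can only decrease the value, giving $\mathbb{E}[g(r_{v,k})]\geq\int_{r_p}^{\infty}g(r)f(r)\,{\rm d}r$. This is the only genuine inequality beyond Jensen; everything after it is an identity aimed at putting $\int_{r_p}^{\infty}g(r)f(r)\,{\rm d}r$ into the closed form (\ref{51M}).

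To evaluate that integral I would use the layer-cake representation $g(r)=\int_0^{\infty}\mathbf{1}\{t<g(r)\}\,{\rm d}t$ and swap the order of integration. Since $g$ is strictly decreasing, $g(r)>t$ is equivalent to $r<r(t):=\left(Q/\left((e^t-1)\gamma_v^{({\rm DA})}\right)\right)^{1/\alpha}$, and the inner $r$-range $\{r_p\leq r<r(t)\}$ is nonempty exactly when $t<g(r_p)=\ln(1+Q/(\gamma_v^{({\rm DA})}r_p^{\alpha}))$. Integrating $f$ over this range gives $S(r_p)-S(r(t))=e^{-\pi(\lambda_M+\lambda_S)r_p^2}-\exp\!\left(-\pi(\lambda_M+\lambda_S)\left(Q/\left((e^t-1)\gamma_v^{({\rm DA})}\right)\right)^{2/\alpha}\right)$. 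The constant piece integrates over $t\in[0,g(r_p)]$ to $g(r_p)S(r_p)$, which is the first summand of (\ref{51M}), while the $t$-dependent piece reproduces the integral in (\ref{51M}). Equivalently, one integration by parts using $f=-S'$ yields the boundary term $g(r_p)S(r_p)$ plus $\int_{r_p}^{\infty}g'(r)S(r)\,{\rm d}r$, and the substitution $t=g(r)$ turns the latter into the same final integral.

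The computation is essentially bookkeeping, so I do not expect a real obstacle. The two points that need care are conceptual rather than technical: first, justifying that the type/distance independence of the nearest point legitimizes using the combined intensity $\lambda_M+\lambda_S$ throughout (rather than $\lambda_v$); and second, the change of variable $t=g(r)$, where one must track the limits ($r=r_p\leftrightarrow t=g(r_p)$ and $r\to\infty\leftrightarrow t\to0$) and the accompanying orientation reversal, as this is what produces the minus sign in front of the integral in (\ref{51M}).
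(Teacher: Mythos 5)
Your proposal is correct and follows essentially the same route as the paper's Appendix~B: both start from the Jensen bound (\ref{inequ}), use the tail-probability (layer-cake) representation $\mathbb{E}[X]=\int_{t>0}\Pr(X>t)\,{\rm d}t$ together with the nearest-neighbor distance law of the superposed PPP at intensity $\lambda_M+\lambda_S$, and obtain the upper limit $\ln\bigl(1+Q/(\gamma_v^{({\rm DA})}r_p^{\alpha})\bigr)$ from the constraint $r>r_p$. The only cosmetic difference is that the paper folds the protective radius directly into a (non-renormalized) conditional CDF $F_{r_{v,k}}^{(U)}(r)=e^{-\pi(\lambda_M+\lambda_S)r_p^2}-e^{-\pi(\lambda_M+\lambda_S)r^2}$, whereas you keep the untruncated law and discard the nonnegative contribution from $r_{v,k}<r_p$ as an explicit extra inequality --- an equivalent, if anything slightly cleaner, bookkeeping of the same step.
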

\begin{proof}
The proof is given in Appendix \ref{apx:B}.
\end{proof}

Then, the average spectral efficiency for a certain association case, i.e.,~${\rm SE}_{m,n}^{({\rm DA})}$, is obtained by averaging the spectral efficiency derived in Lemma \ref{lemma2} over the $K_M$ users whose nearest MBS is $\mbox{MBS}_m$.
Finally, by combining ${\rm SE}_{m,n}^{({\rm DA})}$ with the association probabilities, we derive the average spectral efficiency of the system with DA, represented by ${\rm SE}^{({\rm DA})}$ and given in the following~theorem.

\begin{theorem}\label{theorem2}
The expression for ${\rm SE}^{({\rm DA})}$ is given in (\ref{LB^DA}),
\begin{equation}\label{LB^DA}
{\rm SE}^{({\rm DA})}=\displaystyle{\sum_{n=1}^N}\displaystyle{\sum_{K_{M,m,n}^{(U)}=0}^{K_{m,n}}} \left\{\displaystyle{\prod_{n=1}^N}\left[\binom{K_{M,m,n}^{(U)}}{K_{m,n}}(A_M^{(U)})^{K_{M,m,n}^{(U)}}(A_S^{(U)})^{K_{S,m,n}^{(U)}}\right]{\rm SE}_{m,n}^{({\rm DA})}\right\},
\end{equation}
where~${\rm SE}_{m,n}^{({\rm DA})}$ is calculated as
\begin{eqnarray}\label{57}
  {\rm SE}^{({\rm DA})}_{m,n}\!=\!\!\frac{K_M^{(U)}{\rm SE}^{({\rm DA})}_{M,m,n}\!+\!\displaystyle{\sum_{n=1}^N}K_{S,m,n}^{(U)}{\rm SE}^{({\rm DA})}_{S,m,n}}{K_M}.
\end{eqnarray}
\end{theorem}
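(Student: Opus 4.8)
The plan is to obtain ${\rm SE}^{({\rm DA})}$ via the law of total expectation, viewing the system-level spectral efficiency as the average of a per-case quantity over the random association configuration. The argument naturally splits into two levels: first I would fix an association configuration and average the per-user bound of Lemma \ref{lemma2} over the users sharing a common nearest MBS, obtaining ${\rm SE}_{m,n}^{({\rm DA})}$; then I would average this per-case quantity over the distribution of configurations, which I will argue is a product of independent binomials, to recover (\ref{LB^DA}).

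For the inner level, I would focus on the $K_M$ users whose nearest MBS is $\mbox{MBS}_m$. In a fixed configuration, $K_M^{(U)}$ of them associate to $\mbox{MBS}_m$ and each attains the bound ${\rm SE}_{M,m,n}^{({\rm DA})}$, while for each $n$ the $K_{S,m,n}^{(U)}$ users that associate to $\mbox{SBS}_n$ each attain ${\rm SE}_{S,m,n}^{({\rm DA})}$. Since the two groups exhaust the $K_M$ users (the counts satisfy $K_M^{(U)}+\sum_{n=1}^N K_{S,m,n}^{(U)}=K_M$), partitioning the arithmetic mean of the per-user bounds by association type and dividing by $K_M$ gives (\ref{57}) immediately.

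For the outer level, I would invoke the DA association rule: by (\ref{eqn:1}) a user ties to $\mbox{MBS}_m$ precisely when $r_{M,k}<r_{S,k}$, an event of probability $A_M^{(U)}$ by (\ref{33}) that depends only on the BS densities. Because the users form an independent homogeneous PPP, their distances to $\mbox{MBS}_m$ and $\mbox{SBS}_n$---and hence their association decisions---are independent across users. Within a group of size $K_{m,n}$ the decisions are therefore i.i.d.\ Bernoulli$(A_M^{(U)})$ trials, so $K_{M,m,n}^{(U)}\sim\text{Binomial}(K_{m,n},A_M^{(U)})$; the groups indexed by $n$ are disjoint and thus independent, yielding the product of binomial weights appearing in (\ref{LB^DA}). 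Summing ${\rm SE}_{m,n}^{({\rm DA})}$ against these weights over all admissible configurations $(K_{M,m,1}^{(U)},\dots,K_{M,m,N}^{(U)})$ is exactly the total expectation, which establishes (\ref{LB^DA}).

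The hard part will be bookkeeping rather than any deep inequality: the per-user bounds ${\rm SE}_{M,m,n}^{({\rm DA})}$ and ${\rm SE}_{S,m,n}^{({\rm DA})}$ are \emph{not} constant across configurations, since through $\gamma_v^{({\rm DA})}$ in Lemma \ref{lemma1} they depend on the realized load $K_v^{(U)}$. They therefore cannot be pulled outside the configuration sum, and one must carry this dependence through every term---which is exactly why the explicit multi-index summation in (\ref{LB^DA}) is needed. A secondary point to verify is that the Bernoulli trials within a group are genuinely independent, which rests on the independence of distinct users' positions under the PPP model.
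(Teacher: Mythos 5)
Your proposal is correct and follows essentially the same route as the paper: the paper's own (much terser) proof likewise takes (\ref{57}) as the per-configuration average and then weights it by the product-of-binomials probability $\prod_{n=1}^N\binom{K_{M,m,n}^{(U)}}{K_{m,n}}(A_M^{(U)})^{K_{M,m,n}^{(U)}}(A_S^{(U)})^{K_{S,m,n}^{(U)}}$ of each association case before summing, i.e., the law of total expectation over the binomially distributed configuration. Your additional bookkeeping --- verifying $K_M^{(U)}+\sum_{n}K_{S,m,n}^{(U)}=K_M$, the per-user Bernoulli independence under the PPP, and the fact that ${\rm SE}_{v,m,n}^{({\rm DA})}$ depends on the realized loads through $\gamma_v^{({\rm DA})}$ and so cannot leave the configuration sum --- is all consistent with, and merely more explicit than, the paper's argument.
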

\begin{proof}
Since (\ref{57}) is obtained for each certain $K_M^{(U)}$ and $K_{S,m,n}^{(U)}$ where $1\le n\le N$, we multiply ${\rm SE}^{({\rm DA})}_{m,n}$ with~$\displaystyle{\prod_{n=1}^N}\left[\binom{K_{M,m,n}^{(U)}}{K_{m,n}}(A_M^{(U)})^{K_{M,m,n}^{(U)}}(A_S^{(U)})^{K_{S,m,n}^{(U)}}\right]$, which is the corresponding probability, for each possible association case. Finally, the sum of all the results is derived as the final expression for the average spectral efficiency with DA.
\end{proof}
It can be seen from (\ref{INR51}) that the increasing numbers of antennas lead to lower interference and hence higher spectral efficiency, which means significant benefits are brought by multiple antennas.
Hence, BSs equipped with more antennas could be deployed to achieve enhanced spectral efficiency.
Besides, when $\lambda_u$ increases, i.e., more users demand to be served in the area, the interference at each BS becomes severer according to Lemma \ref{lemma1}, thus the spectral efficiency decreases.
Furthermore, note that $P_M$ and $P_S$ have no impact on the spectral efficiency of users with DA for the following reasons.
First, the UL spectral efficiency depends on the UL SINR, which is invariant of the transmit powers of BSs in DL.
Second, the UL association probabilities with DA only depend on the densities of BSs.
Again, the result implies that DA allows UL and DL transmissions to be totally independent and users are enabled to associate to the optimal BSs in both UL and DL. Hence, compared with CA scenario, the UL performance is improved via DA.

Here, consider a special case of $\alpha=4$, then the lower bound in Theorem \ref{theorem2} can be rewritten with exponential integrals, i.e.,~$Ei(\cdot)$, given in Corollary \ref{corollary3+}.
\begin{corollary}\label{corollary3+}
When $\alpha=4$, the lower bound can be simplified by substituting ${\rm SE}_{v\!,m,n}^{({\rm DA})}$ given in~(\ref{51}) into (\ref{LB^DA}) and (\ref{57}),
\begin{eqnarray}\label{51}
 \hspace{-0.8cm}&&\hspace{-0.8cm} {\rm SE}_{v,m,n}^{({\rm DA})} \geq  {{\rm ln}\left(1+\frac{Q}{\gamma_{{v}}^{({\rm DA})}r_p^4}\right)} {\rm exp}(-\pi(\lambda_M+\lambda_S)r_p^2)-\!2\left({\rm Re}\{e^{ib_v}\!Ei\!\left(\!-a\!-\!ib_v\!\right)\}\!-\!Ei(-a)\right)\!\!,
\end{eqnarray}
where $a$ and $b_v$ are defined as
\begin{eqnarray}
  a&=&(\lambda_M+\lambda_S)\pi r_p^2,\\
  b_v&=&(\lambda_M+\lambda_S)\pi\left(\frac{Q}{\gamma_{{v}}^{({\rm DA})}}\right)^{\frac{1}{2}}.
\end{eqnarray}
\end{corollary}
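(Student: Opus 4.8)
The plan is to start from the per-BS lower bound already proved in Lemma~\ref{lemma2} and simply evaluate its integral term in closed form once $\alpha=4$ is imposed. Setting $\alpha=4$ makes the exponent $2/\alpha$ equal to $1/2$, and the leading product $\ln(\cdots)\exp(\cdots)$ in (\ref{51M}) already coincides with the first term of (\ref{51}). Hence the entire content of the corollary reduces to showing that
\[
I \triangleq \int_0^{T}\exp\!\left(-(\lambda_M+\lambda_S)\pi\left(\frac{Q}{(e^t-1)\gamma_{{v}}^{({\rm DA})}}\right)^{1/2}\right){\rm d}t
= 2\left({\rm Re}\{e^{ib_v}Ei(-a-ib_v)\}-Ei(-a)\right),
\]
where $T={\rm ln}(1+Q/(\gamma_{{v}}^{({\rm DA})}r_p^4))$. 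Recognizing that $b_v=(\lambda_M+\lambda_S)\pi(Q/\gamma_{{v}}^{({\rm DA})})^{1/2}$, the integrand is exactly $\exp(-b_v(e^t-1)^{-1/2})$, which is the form that invites an exponential-integral representation.

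First I would apply the single substitution $w=b_v(e^t-1)^{-1/2}$, equivalently $t=\ln(1+b_v^2/w^2)$, giving ${\rm d}t=-\tfrac{2b_v^2}{w(w^2+b_v^2)}\,{\rm d}w$. The lower limit $t=0$ maps to $w\to+\infty$, while the upper limit $t=T$ maps, after cancelling $\sqrt{Q/\gamma_{{v}}^{({\rm DA})}}$, precisely to $w=(\lambda_M+\lambda_S)\pi r_p^2=a$. This turns the integral into $I=\int_a^{+\infty}\tfrac{2b_v^2e^{-w}}{w(w^2+b_v^2)}\,{\rm d}w$, and the partial-fraction identity $\tfrac{2b_v^2}{w(w^2+b_v^2)}=2\bigl(\tfrac{1}{w}-\tfrac{w}{w^2+b_v^2}\bigr)$ splits it into a purely real piece and an oscillatory piece.

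Next I would identify the two pieces with exponential integrals. The real piece is $\int_a^{+\infty}\tfrac{e^{-w}}{w}\,{\rm d}w=-Ei(-a)$ directly from the definition of $Ei$. For the oscillatory piece I would write $\tfrac{w}{w^2+b_v^2}=\tfrac12\bigl(\tfrac{1}{w-ib_v}+\tfrac{1}{w+ib_v}\bigr)$ and shift each term's contour by $\mp ib_v$, turning them into $-e^{-ib_v}Ei(-a+ib_v)$ and $-e^{ib_v}Ei(-a-ib_v)$ respectively. Since $Ei(-a+ib_v)=\overline{Ei(-a-ib_v)}$ and $e^{-ib_v}=\overline{e^{ib_v}}$, the two are complex conjugates, so their sum collapses to $-2\,{\rm Re}\{e^{ib_v}Ei(-a-ib_v)\}$. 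Collecting terms yields $I=2\bigl({\rm Re}\{e^{ib_v}Ei(-a-ib_v)\}-Ei(-a)\bigr)$, which is exactly the subtracted term of (\ref{51}); substituting this into (\ref{51M}) establishes the per-BS bound, and the same weighting and averaging used in Theorem~\ref{theorem2} then carries (\ref{51}) into (\ref{LB^DA}) and (\ref{57}).

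The main obstacle is the rigorous handling of the complex-argument exponential integral, namely justifying that $\int_{c}^{\infty}\tfrac{e^{-\zeta}}{\zeta}\,{\rm d}\zeta=-Ei(-c)$ remains valid for the shifted lower limit $c=a\pm ib_v$. This requires deforming the horizontal ray $\{a\pm ib_v+s:s\ge0\}$ back onto the positive real axis; because ${\rm Re}(c)=a>0$ the pole at the origin is never crossed and the integrand decays at infinity, so the deformation is legitimate and the conjugate-symmetry step producing the real part is justified. Everything else is routine substitution and bookkeeping.
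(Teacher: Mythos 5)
Your derivation is correct, and it supplies a computation the paper itself omits: Corollary~\ref{corollary3+} is stated without proof, so the only basis for comparison is whether your closed-form evaluation actually reproduces (\ref{51}), and it does. The chain checks out step by step: with $\alpha=4$ the integrand in (\ref{51M}) is $\exp(-b_v(e^t-1)^{-1/2})$; your substitution $w=b_v(e^t-1)^{-1/2}$ gives ${\rm d}t=-\tfrac{2b_v^2}{w(w^2+b_v^2)}{\rm d}w$ with the upper limit mapping exactly to $w=(\lambda_M+\lambda_S)\pi r_p^2=a$ (the factors $\sqrt{Q/\gamma_{{v}}^{({\rm DA})}}$ cancel as you say); the partial-fraction split $\tfrac{2b_v^2}{w(w^2+b_v^2)}=\tfrac{2}{w}-\tfrac{2w}{w^2+b_v^2}$ is an identity; and $\int_a^\infty e^{-w}w^{-1}{\rm d}w=-Ei(-a)$ together with the conjugate-pair reduction yields precisely $2\left({\rm Re}\{e^{ib_v}Ei(-a-ib_v)\}-Ei(-a)\right)$. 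The one genuinely delicate point is the one you flagged: the identity $\int_{a\pm ib_v}^{\infty}e^{-\zeta}\zeta^{-1}{\rm d}\zeta=-Ei(-a\mp ib_v)$ amounts to adopting the convention $Ei(-c)=-E_1(c)$ for $\mathrm{Re}\,c>0$, which your contour-deformation argument (no pole crossing since $a>0$, decay at infinity) makes legitimate. Be aware that under other common conventions for $Ei$ at complex arguments (e.g., implementations that define $Ei(z)=-E_1(-z)\pm i\pi$ off the real axis) the combination ${\rm Re}\{e^{ib_v}Ei(-a-ib_v)\}$ would shift by $\pm\pi\sin b_v$, so stating the branch choice explicitly, as you do, is not pedantry but necessary for the formula in (\ref{51}) to be unambiguous; equivalently one could state the result convention-free as $2\left(E_1(a)-{\rm Re}\{e^{ib_v}E_1(a+ib_v)\}\right)$ for the subtracted integral. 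The final averaging through (\ref{57}) and (\ref{LB^DA}) is indeed just the weighting of Theorem~\ref{theorem2} and needs no further argument.
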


Then, consider a special case when each BS serves at most one user, then we have $K_{m,n}\!=\!1$ as well as $K_{m,i}\!=\!0$ and~$K_{j,n}\!=\!0$ where $i$ varies from 1 to~$N$ with $i\neq n$ and~$j$ varies from~1 to~$M$ with $j\neq m$. Hence, the lower bound on the spectral efficiency is obtained in the following~corollary.
\begin{corollary}\label{corollary4}
When each BS serves at most one user, the spectral efficiency of the system with DA is developed as
\begin{equation}\label{62}
{\rm SE}^{({\rm DA})}= A_M^{(U)}{\rm SE}_{M\!,m,n}^{({\rm DA})}+A_S^{(U)}{\rm SE}_{S,m,n}^{({\rm DA})}.
\end{equation}
The lower bounds on ${\rm SE}_{v\!,m,n}^{({\rm DA})}$  is as derived in Lemma \ref{lemma2} while~$\gamma_{{v}}^{({\rm DA})}$ in ${\rm SE}_{v\!,m,n}^{({\rm DA})}$ is given by
\begin{equation}\label{61_1}
  \gamma_{{v}}^{({\rm DA})} \!\!\triangleq\! \frac{1}{L_v\!-\!1}\left(\!\frac{2Q\pi(\lambda_M\!+\!\lambda_S)\left(R^{2-\alpha}\!-\!r_p^{2-\alpha}\right)}{2-\alpha}\!+\!\sigma^2\!\right)\!\!,
\end{equation}
where $\lambda_M+\lambda_S$ is the density of interfering users.
\end{corollary}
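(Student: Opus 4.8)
The plan is to establish the corollary in two stages: first specialize Theorem~\ref{theorem2} to obtain the two-term form~(\ref{62}), and then specialize Lemma~\ref{lemma1} to obtain the closed-form interference-plus-noise quantity $\gamma_{v}^{({\rm DA})}$ in~(\ref{61_1}).

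For the first stage I would substitute the single-user constraints $K_{m,n}=1$, $K_{m,i}=0$ $(i\neq n)$ and $K_{j,n}=0$ $(j\neq m)$ directly into~(\ref{LB^DA}) and~(\ref{57}). Because $K_{m,n}=1$, the inner summation index $K_{M,m,n}^{(U)}$ ranges only over $\{0,1\}$, so the binomial sum collapses to exactly two terms, each carrying a unit binomial coefficient: the term $K_{M,m,n}^{(U)}=1$ (the user associates to the MBS) has weight $A_M^{(U)}$, and the term $K_{M,m,n}^{(U)}=0$ (the user associates to the SBS) has weight $A_S^{(U)}$. For every other index $n'\neq n$ one has $K_{m,n'}=0$, so those factors in the product $\prod_{n=1}^N$ are trivial and the product reduces to the single nontrivial $n$-factor. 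Since $K_M=1$ in this regime, equation~(\ref{57}) collapses ${\rm SE}_{m,n}^{({\rm DA})}$ to ${\rm SE}_{M,m,n}^{({\rm DA})}$ or ${\rm SE}_{S,m,n}^{({\rm DA})}$ according to the association, and assembling the two weighted terms yields~(\ref{62}).

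For the second stage I would begin from~(\ref{INR51}). Setting $K_v^{(U)}=1$ immediately replaces $L_v-K_v^{(U)}$ in the denominator by $L_v-1$. I would then argue that the cross-tier contribution vanishes: since each BS serves at most one user, no user other than the tagged one has its nearest $v$BS equal to the tagged BS, so $\kappa_v^{(U)}=0$ and the term $\kappa_v^{(U)}Q\overline{r}_{{v},1}^{({\rm DA})}$ drops out. The only surviving interference is $Q\overline{r}_{{v}}$, but in this regime the served (hence interfering) users no longer have density $\lambda_u$; instead each of the $\lambda_M+\lambda_S$ BSs feeds exactly one user, so the interfering users form a point process of density $\lambda_M+\lambda_S$. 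Re-evaluating the interference integral with this density over the annulus $[r_p,R]$ gives $\int_{r_p}^{R} Q r^{-\alpha}\,2\pi(\lambda_M+\lambda_S)r\,{\rm d}r=\frac{2Q\pi(\lambda_M+\lambda_S)(R^{2-\alpha}-r_p^{2-\alpha})}{2-\alpha}$, and adding $\sigma^2$ and dividing by $L_v-1$ recovers~(\ref{61_1}).

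The main obstacle will be the second stage, specifically justifying that the effective density of interfering users degenerates from $\lambda_u$ to the total BS density $\lambda_M+\lambda_S$. This requires recognizing that in the single-connection, one-user-per-BS regime the interference is governed by the density of active BSs rather than the raw user density, and confirming that the nearest-BS weighting factors appearing in the general expression~(\ref{47}) reduce to the plain area element $2\pi(\lambda_M+\lambda_S)r\,{\rm d}r$. I would also verify that the resulting power-law integral converges in the physically relevant regime $\alpha>2$, so that $R^{2-\alpha}-r_p^{2-\alpha}$ is well defined and the prefactor $(2-\alpha)^{-1}$, being negative, combines with the negative bracket to yield a positive interference power.
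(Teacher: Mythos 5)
Your proposal is correct and follows essentially the same route the paper takes (the paper states Corollary \ref{corollary4} without an explicit proof, leaving exactly the specialization you carry out): collapsing Theorem \ref{theorem2} and (\ref{57}) under $K_{m,n}=1$, $K_{m,i}=0$, $K_{j,n}=0$ to get (\ref{62}), and specializing Lemma \ref{lemma1} with $K_v^{(U)}=1$, $\kappa_v^{(U)}=0$, and the interferer density replaced by $\lambda_M+\lambda_S$ (one active user per BS, matching the model of \cite{7003998} used in (\ref{63})) to get (\ref{61_1}). You correctly identify that this density substitution is the one genuinely new step, and your annulus integral and sign check for $\alpha>2$ match the paper's expression exactly.
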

Corollary \ref{corollary4} describes the scenario where each BS serves at most one user in each resource block, which is discussed in previous works, e.g., \cite{7003998,7247166}. However, different from previous works, multiple antennas are considered and ZF detector is utilized here.
Therefore, a comparison between the expectation of the interference plus noise given in \cite{7003998} and that given in Corollary~\ref{corollary4} can be made.

According to a signal model described in \cite{7003998}, we compute the expectation of interference plus noise at $v$BS as follows, where an interference area with radius $R$ and a protective area with radius $r_p$ are still assumed,
\begin{eqnarray}\label{63}
  \mathbb{E}\left[{{I_v^{(U)}}}+\sigma^2\right]=\frac{2Q\pi(\lambda_M\!+\!\lambda_S)(R^{2-\alpha}\!-\!r_p^{2-\alpha})}{2-\alpha}+\sigma^2.
\end{eqnarray}
Based on (\ref{61_1}) and (\ref{63}), it can be seen that by using multiple receive antennas and the ZF detector, the interference plus noise at a $v$BS could be reduced by $\frac{1}{L_v-1}$, which shows significant interference suppression brought by multiple antennas.
\subsection{UL Spectral efficiency with CA}
In this subsection, a lower bound on the average spectral efficiency with CA is explored to make a comparison with DA. First, the expectation of interference plus noise at $\mbox{MBS}_m$, i.e.,~$\gamma_{{M}}^{({\rm CA})}$, is presented in the following lemma.
\begin{lemma}\label{lemma4}
The expression for $\gamma_{{M}}^{({\rm CA})}$ is developed as
\begin{equation}\label{61}
  \gamma_{{M}}^{({\rm CA})}\triangleq\frac{Q\overline{r}_{{M}}+\sum_{n=1}^NK_{S,m,n}^{(D)}Q\overline{r}_{{M},1}^{({\rm CA})}+\sigma^2}{L_M-K_M^{(D)}},
\end{equation}
where $Q\overline{r}_{{M},1}^{({\rm CA})}$ is the interference from each user associated to SBSs while its nearest MBS is $\mbox{MBS}_m$, given by
\begin{equation}\label{64}
  \overline{r}_{{M}\!,1}^{({\rm CA}\!)}\!\!\triangleq\!\!\frac{2\pi\widetilde{\lambda}\lambda_M}{\lambda_SC_{m\!,n}^2}\!\!\int_{\frac{r_p}{C_{m\!,n}}}^R \!\!\!\!\!r^{\!-\!\alpha\!+\!1}e^{\!-\!\lambda_M\pi r^2}\!\left(\!e^{\!-\!\lambda_S\pi r_p^2}\!-\!e^{\!-\!\lambda_S\pi C_{m\!,n}^2r^2}\!\right)\!\!{\rm d}r,
\end{equation}
in which $\widetilde{\lambda}$ is defined as
\begin{equation}
\widetilde{\lambda}=\lambda_M\!+\!\lambda_SC_{m,n}^2.
\end{equation}
\end{lemma}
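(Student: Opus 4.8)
The plan is to mirror the derivation of Lemma~\ref{lemma1}, since $\gamma_{M}^{({\rm CA})}$ plays exactly the role of $\gamma_{v}^{({\rm DA})}$ but under the CA association rule. I would start from the definition of the expected interference plus noise at $\mbox{MBS}_m$, namely $\gamma_{M}^{({\rm CA})}=\mathbb{E}_{\mathbf{a}_{M,k},r_{M,k}}[I_{M}^{(U)}+\|\mathbf{a}_{M,k}\|^{2}\sigma^{2}]$, where $I_{M}^{(U)}$ has the form in~(\ref{I_r}) but with the interfering set dictated by the CA (DL-received-power) association. The first reduction is to take the expectation over the ZF vectors: because $\mathbf{A}_{M}=\mathbf{G}_{M}(\mathbf{G}_{M}^{H}\mathbf{G}_{M})^{-1}$ and the interfering channels $\mathbf{h}_{M,j}$ are independent of $\mathbf{G}_{M}$, both $\mathbb{E}[\|\mathbf{a}_{M,k}\|^{2}]$ and $\mathbb{E}[|\mathbf{a}_{M,k}^{H}\mathbf{h}_{M,j}|^{2}]$ equal $1/(L_M-K_M^{(D)})$, with $K_M^{(D)}$ streams because in CA the UL association coincides with the DL association established in~(\ref{33_1}). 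This factors the common $1/(L_M-K_M^{(D)})$ out of every term and reduces the problem to a spatial average of $Qr^{-\alpha}$ over the interfering users, justified by linearity of expectation regardless of dependence between them.

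Next I would partition the interferers into the two groups that appear in~(\ref{61}). The first group consists of users whose nearest MBS is not $\mbox{MBS}_m$; since the MBS tier is homogeneous, being served by $\mbox{MBS}_m$ requires it to be the nearest MBS, so these users are never associated to $\mbox{MBS}_m$ under either CA or DA, and the interference they inflict depends only on their distance to it (all users transmit at power $Q$). Hence this group contributes exactly the term $Q\overline{r}_{M}$ already computed in~(\ref{47}) and may be reused verbatim. The second group consists of users whose nearest MBS \emph{is} $\mbox{MBS}_m$ but which are nevertheless served by an SBS; there are $\sum_{n=1}^{N}K_{S,m,n}^{(D)}$ of them, and each contributes the per-user quantity $Q\overline{r}_{M,1}^{({\rm CA})}$ that remains to be derived.

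The crux is the computation of $\overline{r}_{M,1}^{({\rm CA})}$ in~(\ref{64}). Conditioning on a user whose nearest MBS is $\mbox{MBS}_m$ at distance $r$, I would use the nearest-MBS distance density $2\pi\lambda_{M}re^{-\lambda_{M}\pi r^{2}}$ obtained from the PPP void probability~(\ref{27}). The decisive difference from the DA case lies in the association event: rearranging the CA criterion~(\ref{33_1}) shows that the user is served by an SBS precisely when $r_{S,k}<C_{m,n}r$, with $C_{m,n}$ as in~(\ref{37}). Imposing the protective radius $r_p$, the probability that the nearest SBS lies in $(r_p,C_{m,n}r)$ is $e^{-\lambda_{S}\pi r_p^{2}}-e^{-\lambda_{S}\pi C_{m,n}^{2}r^{2}}$, which is exactly the bracketed factor in~(\ref{64}); this is also what forces the lower limit $r_p/C_{m,n}$, since the event is empty unless $C_{m,n}r>r_p$. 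Weighting by the path loss $r^{-\alpha}$ and the area element, then normalizing to a per-user interference by dividing by the SBS association probability $A_{S,m,n}^{(D)}$ from~(\ref{36_2})---whose reciprocal is $\widetilde{\lambda}/(\lambda_{S}C_{m,n}^{2})$---produces the prefactor $2\pi\widetilde{\lambda}\lambda_{M}/(\lambda_{S}C_{m,n}^{2})$ and completes~(\ref{64}). Collecting $Q\overline{r}_{M}$, $\sum_{n}K_{S,m,n}^{(D)}Q\overline{r}_{M,1}^{({\rm CA})}$ and $\sigma^{2}$ in the numerator and dividing by $L_M-K_M^{(D)}$ then yields~(\ref{61}).

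I expect the main obstacle to be the correct set-up of the conditional density in the second group: one must simultaneously encode that $\mbox{MBS}_m$ is the nearest MBS (the $e^{-\lambda_{M}\pi r^{2}}$ factor), that the SBS-association event scales the relevant SBS radius by $C_{m,n}$ rather than leaving it at $r$ as in the DA derivation, and that the outcome is rescaled to a per-user quantity through $A_{S,m,n}^{(D)}$. A secondary subtlety is that $C_{m,n}$ itself depends on $K_M^{(D)}$ and $K_S^{(D)}$ through~(\ref{37}); this coupling is already resolved by Theorem~\ref{theorem1}, so here $C_{m,n}$ may be treated as a fixed constant throughout the integration.
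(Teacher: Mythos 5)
Your proposal is correct and follows essentially the same route as the paper: the paper likewise factors out $1/(L_M-K_M^{(D)})$ via the Wishart-trace argument of Lemma~\ref{lemma1}, reuses $Q\overline{r}_{M}$ unchanged, and derives $\overline{r}_{M,1}^{({\rm CA})}$ from the CDF in~(\ref{92}) --- i.e., the joint event $r_p<r_{M,k}<r$, $r_p<r_{S,k}<C_{m,n}r_{M,k}$ normalized by $A_{S,m,n}^{(D)}$ --- whose derivative is exactly the conditional density (nearest-MBS density times the void-probability bracket, rescaled by $\widetilde{\lambda}/(\lambda_S C_{m,n}^2)$) that you construct, including the lower limit $r_p/C_{m,n}$. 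No gaps; your version merely spells out details (the $\mathbb{E}\bigl[\lvert\mathbf{a}_{M,k}^{H}\mathbf{h}_{M,j}\rvert^{2}\bigr]$ reduction and the interferer partition) that the paper delegates to Appendix~\ref{apx:A}.
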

\begin{proof}
Since the first part of the interference, i.e., $Q\overline{r}_{{M}}$, is the same as that in $\gamma_{{M}}^{({\rm DA})}$, only the expression for $ \overline{r}_{{M}\!,1}^{({\rm CA}\!)}$ is required here. First, we obtain the CDF of the distance between $\mbox{MBS}_m$ and users associated to SBSs while their nearest MBS is~$\mbox{MBS}_m$,
\begin{equation}\label{92}
 F_{M}^{(\!{\rm CA}\!)}\!(\!r\!)\!=\!\!\frac{{\rm Pr}\!\left(\!r_p\!\!<\!r_{M\!,k}\!\!<\!r\!,\!r_p\!\!<\!r_{S,k}\!\!<\!\!C_{m,n}r_{M,k}\!\right)}{A_{S,m,n}^{(D)}} \!\!=\!\!\frac{2\pi\widetilde{\lambda}\lambda_M}{\lambda_SC_{m\!,n}^2}\!\!\int_{\!\frac{r_p}{C_{m\!,n}}}^r \!\!\!\!\!r_{M\!,k}e^{\!-\!\lambda_{\!M}\!\pi r_{\!M\!,k}^2}\!\!\left(\!e^{\!-\!\lambda_S\!\pi r_p^2}\!\!-\!e^{\!-\!\lambda_S\!\pi C_{m\!,n}^2\!r_{M\!,k}^2}\!\!\right)\!\!{\rm d}r_{M\!,k}.
\end{equation}
Then, we obtain (\ref{64}) using the corresponding PDF.
\end{proof}
With the derived expectation of interference plus noise, a lower bound on ${\rm SE}_{M,m,n}^{{\rm CA}}$, which is the spectral efficiency of users associated to $\mbox{MBS}_m$, is obtained in the following lemma.
\begin{lemma}\label{lemma5}
The spectral efficiency of users associated to $\mbox{MBS}_m$ with CA is lower bounded by
\begin{equation}\label{SE_M^CA}
{\rm SE}_{M,m,n}^{{\rm (CA)}}\!\! \geq\!\! L_0e^{\!-\pi\widetilde{\lambda}\left(\!\frac{r_p}{C_{m,n}}\!\right)^2\!}\!\!\!+\!\frac{(\!L_1\!-\!L_0\!)\widetilde{\lambda}}{\lambda_M}
  e^{\!-(\!\lambda_M\!+\!\lambda_S\!)\pi r_p^2}\!-\!\!\!\int_0^{L_0} \!\!e^{\!-\widetilde{\lambda}\!\pi\!\left(\!\!\frac{Q}{(\!e^t\!-\!1)\gamma_{{M}}^{(\!{\rm CA}\!)}}\!\!\right)^{\frac{2}{\alpha}}\!}\!\!\!\!\!{\rm d}t-\!\frac{\widetilde{\lambda}e^{-\!\lambda_S\pi r_p^2}}{\lambda_M}\!\!\!\int_{L_0}^{L_1} \!\! e^{-\!\lambda_M\pi\!\left(\!\!\frac{Q}{(\!e^t\!-\!1\!)\gamma_{{M}}^{(\!{\rm CA}\!)}}\!\!\right)^{\frac{2}{\alpha}}}\!\!\!\!\!\!{\rm d}t,
\end{equation}
where $L_0$ and $L_1$ are defined as
\begin{eqnarray}\label{INR71}
  L_0&=&{\rm ln}\left(1+\frac{Q}{\gamma_{{M}}^{({\rm CA})}(r_p/C_{m,n})^{\alpha}}\right),\label{71_1}\\
  L_1&=&{\rm ln}\left(1+\frac{Q}{\gamma_{{M}}^{({\rm CA})}r_p^{\alpha}}\right).\label{71_2}
\end{eqnarray}
\end{lemma}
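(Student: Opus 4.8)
The plan is to mirror the argument used for the DA bound in Lemma~\ref{lemma2}, adapting it to the CA association rule. Starting from ${\rm SE}_{M,m,n}^{({\rm CA})}=\mathbb{E}[{\rm ln}(1+{\rm SINR})]$ with the UL SINR at $\mbox{MBS}_m$, I would first invoke the concavity of $\phi(x)={\rm ln}(1+1/x)$ and Jensen's inequality, exactly as in (\ref{inequ}), to pull the interference-plus-noise term out as its mean. Using $\gamma_{M}^{({\rm CA})}$ from Lemma~\ref{lemma4}, this reduces the problem to lower bounding $\mathbb{E}_{r_{M,k}}[{\rm ln}(1+Qr_{M,k}^{-\alpha}/\gamma_{M}^{({\rm CA})})]$, where the expectation is now over the \emph{conditional} law of the serving distance $r_{M,k}$ given that the tagged user is CA-associated to $\mbox{MBS}_m$. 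This conditional law is the one genuinely new ingredient relative to the DA proof.

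The heart of the argument is therefore this conditional distance law. From (\ref{33_1}) the CA association event is $r_{S,k}>C_{m,n}r_{M,k}$, and together with the protective radius $r_p$ (no user within $r_p$ of a BS) the admissible region becomes $r_{S,k}>\max(r_p,C_{m,n}r_{M,k})$ with $r_{M,k}>r_p$. The $\max$ splits the range of $r_{M,k}$ at $r_p/C_{m,n}$: on $(r_p/C_{m,n},\infty)$ the constraint $r_{S,k}>C_{m,n}r_{M,k}$ is active and, after normalizing by $A_{M,m,n}^{(D)}=\lambda_M/\widetilde{\lambda}$ from Theorem~\ref{theorem1}, the serving distance carries the combined-intensity weight $e^{-\pi\widetilde{\lambda}r^2}$; on $(r_p,r_p/C_{m,n})$ only $r_{S,k}>r_p$ binds, giving an $e^{-\pi\lambda_M r^2}$ weight with prefactor $e^{-\pi\lambda_S r_p^2}$. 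I would then write the nonnegative integrand as a tail integral $\int_0^\infty\Pr({\rm ln}(1+Qr_{M,k}^{-\alpha}/\gamma_{M}^{({\rm CA})})>t)\,{\rm d}t$ and convert $\{\cdot>t\}$ into $\{r_{M,k}<\rho(t)\}$ with $\rho(t)=(Q/((e^t-1)\gamma_{M}^{({\rm CA})}))^{1/\alpha}$. The threshold $\rho(t)$ equals $r_p/C_{m,n}$ at $t=L_0$ and $r_p$ at $t=L_1$, and the protective cutoff forces the integrand to vanish for $t>L_1$, so the $t$-integral splits naturally at $L_0$ and $L_1$ into the two regimes above.

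Carrying out the sub-integral over the ``both-BSs-beyond-$r_p$'' region $r_{M,k}>r_p/C_{m,n}$, where the conditional density is $2\pi\widetilde{\lambda}r e^{-\pi\widetilde{\lambda}r^2}$, yields exactly the leading constant $L_0 e^{-\pi\widetilde{\lambda}(r_p/C_{m,n})^2}$ together with the $\int_0^{L_0}e^{-\widetilde{\lambda}\pi\rho(t)^2}{\rm d}t$ term; the sub-integral over the near-in region $r_p<r_{M,k}<r_p/C_{m,n}$ produces the $(L_1-L_0)\widetilde{\lambda}/\lambda_M\,e^{-(\lambda_M+\lambda_S)\pi r_p^2}$ term and the $\int_{L_0}^{L_1}e^{-\lambda_M\pi\rho(t)^2}{\rm d}t$ term, using the identity $e^{-\pi\widetilde{\lambda}(r_p/C_{m,n})^2}=e^{-\pi\lambda_M(r_p/C_{m,n})^2}e^{-\pi\lambda_S r_p^2}$. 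I expect the main obstacle to be getting this piecewise conditional law and its kink at $r_p/C_{m,n}$ exactly right while keeping the $\widetilde{\lambda}=\lambda_M+\lambda_S C_{m,n}^2$ bookkeeping consistent across the two regimes. A final lower-bounding step, discarding a nonnegative constant left over from the near-in region (nonnegative when $C_{m,n}\le 1$, the regime of interest), then collapses the algebra to the stated closed form, and the result specializes to the DA template of Lemma~\ref{lemma2} once $C_{m,n}\to 1$ merges $L_0$ and $L_1$.
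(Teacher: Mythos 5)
Your proposal is correct and follows essentially the same route as the paper's Appendix \ref{apx:C}: Jensen's inequality to replace interference-plus-noise by $\gamma_{M}^{({\rm CA})}$ from Lemma \ref{lemma4}, the piecewise conditional distance law split at $r_p/C_{m,n}$ (the $e^{-\widetilde{\lambda}\pi r^2}$ weight beyond the kink and the $e^{-\lambda_S\pi r_p^2}$-prefactored $e^{-\lambda_M\pi r^2}$ weight inside it), the tail-probability representation with $\rho(t)=\left(Q/\left((e^t-1)\gamma_{M}^{({\rm CA})}\right)\right)^{1/\alpha}$, and the split of the $t$-integral at $L_0$ and $L_1$. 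If anything, your explicit discarding of the nonnegative leftover constant from the near-in region makes rigorous a step the paper glosses over, since its $F_{r_{M,k}}^{(D,2)}$ in (\ref{CDF2}) omits the near-region probability mass $F_{r_{M,k}}^{(D,1)}(r_p/C_{m,n})$ and is therefore an underestimate of the true conditional CDF on $r>r_p/C_{m,n}$, which only weakens the lower bound as you note.
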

\begin{proof}
The proof is given in Appendix \ref{apx:C}.
\end{proof}
By following similar steps, a lower bound on ${\rm SE}^{({\rm CA})}_{S,m,n}$, which is the spectral efficiency of $\mbox{SBS}_n$ with CA, is obtained in Lemma \ref{lemma6}.
\begin{lemma}\label{lemma6}
The lower bound on ${\rm SE}^{({\rm CA})}_{S,m,n}$ is derived as
\begin{eqnarray}\label{68}
  &&\hspace{-1.1cm}{\rm SE}^{({\rm CA})}_{S,m,n} \geq{\rm exp}\!\!\left(\!-\frac{\widetilde{\lambda}\pi r_p^2}{C_{m,n}^2}\!\right)\!{{\rm ln}\!\!\left(\!1\!+\!\frac{Q}{\gamma_{{S}}^{({\rm CA})}r_p^{\alpha}}\!\right)}\!\!-\!\!\!\int_0^{{\rm ln}\left(1\!+\!\frac{Q}{\gamma_{{S}}^{({\rm CA})}r_p^{\alpha}}\right)}{\rm exp}\!\!\left(\!\!-\frac{\widetilde{\lambda}\,\pi}{C_{m,n}^2}\left(\!\!\frac{Q}{(e^t\!-\!1)\gamma_{{S}}^{({\rm CA})}}\!\right)^{\!\frac{2}{\alpha}}\!\right)\!\!{\rm d}t,
\end{eqnarray}
where the symbol $\gamma_{{S}}^{({\rm CA})}$ is
\begin{eqnarray}\label{67}
\gamma_{{S}}^{({\rm CA})}\!\triangleq\!\left(Q\overline{r}_{{S}}\!+\! QK_{M,m,n}^{(D)}\overline{r}_{{S},1}^{({\rm CA})}\!+\!\sigma^2\right)\frac{1}{L_S-K_S^{(D)}},
\end{eqnarray}
with $\overline{r}_{{S},1}^{({\rm CA})}$ given by
\begin{equation}\label{70_2}
 \! \overline{r}_{{S}\!,1}^{(\!{\rm CA}\!)}\!\!\triangleq\!\!\frac{2\pi\lambda_S\widetilde{\lambda}}{\lambda_M}\!\!\!\int_{r_p}^{R}\!\! r^{\!-\!\alpha\!+\!1}e^{\!-\!\lambda_S\!\pi r^2}\!\left(\!\!e^{\!-\!\lambda_M\!\pi r_p^2}\!-\!e^{\!-\!\lambda_M\!\pi\!\frac{r^2}{C_{m,n}^2}}\!\right)\!\!{\rm d}r.
\end{equation}
\end{lemma}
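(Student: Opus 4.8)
The plan is to prove Lemma \ref{lemma6} by mirroring the derivations of Lemma \ref{lemma4} and Lemma \ref{lemma5} (Appendix \ref{apx:C}) with the roles of the MBS and the SBS interchanged, since the only structural change is that an SBS-served user under CA is one whose nearest SBS is $\mbox{SBS}_n$ but whose DL received power from $\mbox{SBS}_n$ exceeds that from its nearest MBS, i.e.\ $r_{S,k}<C_{m,n}r_{M,k}$. First I would specialize the UL SINR in (\ref{SINR_M^{(U)}}) to $v=S$ and apply Jensen's inequality to the concave map $\phi(x)={\rm ln}(1+1/x)$ exactly as in (\ref{inequ}); this collapses the interference and noise into the single constant $\gamma_{S}^{({\rm CA})}=\mathbb{E}_{\mathbf{a}_{S,k},\{r_{S,j}\}}[I_S^{(U)}+\lVert\mathbf{a}_{S,k}\rVert^2\sigma^2]$ and leaves an expectation over the serving distance $r_{S,k}$ alone.

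Next I would establish (\ref{67})--(\ref{70_2}). As in Lemma \ref{lemma4}, the interference at $\mbox{SBS}_n$ splits into two parts: the contribution $Q\overline{r}_{S}$ from all users whose nearest SBS is not $\mbox{SBS}_n$, which is identical to the corresponding term in $\gamma_{S}^{({\rm DA})}$ and is therefore already given by (\ref{47}) with $v=S$; and the contribution of the $K_{M,m,n}^{(D)}$ users whose nearest SBS is $\mbox{SBS}_n$ but which associate to an MBS, each contributing $Q\overline{r}_{S,1}^{({\rm CA})}$. The ZF detector supplies the array-gain factor $1/(L_S-K_S^{(D)})$ through $\mathbb{E}[\lVert\mathbf{a}_{S,k}\rVert^2]$, as in the proof of Lemma \ref{lemma4}. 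The term $\overline{r}_{S,1}^{({\rm CA})}$ then follows by writing the conditional CDF of the distance between $\mbox{SBS}_n$ and such an interferer, in exact analogy with (\ref{92}): I condition on the nearest SBS being $\mbox{SBS}_n$ at distance $r>r_p$ while the nearest MBS lies in $(r_p,\,r/C_{m,n})$ so that MBS association occurs, and normalize by $A_{M,m,n}^{(D)}=\lambda_M/\widetilde{\lambda}$. This normalization produces the prefactor $2\pi\lambda_S\widetilde{\lambda}/\lambda_M$ in (\ref{70_2}), and differentiating yields the PDF against which $r^{-\alpha}$ is integrated.

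It remains to carry out the expectation over the serving distance and obtain (\ref{68}). Here I would derive the conditional law of $r_{S,k}$ given SBS association: since the user associates to $\mbox{SBS}_n$ iff $r_{S,k}<C_{m,n}r_{M,k}$, conditioning on $r_{S,k}=s\ge r_p$ contributes the factor ${\rm Pr}(r_{M,k}>s/C_{m,n})=e^{-\lambda_M\pi s^2/C_{m,n}^2}$; multiplying by the nearest-SBS density $2\pi\lambda_S s\,e^{-\lambda_S\pi s^2}$ and normalizing by $A_{S,m,n}^{(D)}=\lambda_SC_{m,n}^2/\widetilde{\lambda}$ gives a single Rayleigh-type law with survival function $e^{-\pi(\widetilde{\lambda}/C_{m,n}^2)r^2}$. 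Crucially, because $C_{m,n}\le 1$ in the heterogeneous regime of interest, $s/C_{m,n}\ge r_p$ for every $s\ge r_p$, so the protective radius never becomes the binding constraint and this law stays a single non-piecewise expression, unlike the MBS case of Lemma \ref{lemma5}. Writing $\mathbb{E}_{r_{S,k}}[{\rm ln}(1+Qr_{S,k}^{-\alpha}/\gamma_{S}^{({\rm CA})})]$ against this law and integrating by parts (equivalently, using the tail representation $\mathbb{E}[{\rm ln}(1+Y)]=\int_0^\infty{\rm Pr}({\rm ln}(1+Y)>t)\,{\rm d}t$ with the substitution $t={\rm ln}(1+Qr^{-\alpha}/\gamma_{S}^{({\rm CA})})$) yields the boundary term evaluated at $r_p$ together with the integral over $t\in[0,\,{\rm ln}(1+Q/(\gamma_{S}^{({\rm CA})}r_p^{\alpha}))]$, which is precisely (\ref{68}).

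I expect the main obstacle to be the bookkeeping of the two distance distributions: identifying exactly which users interfere at $\mbox{SBS}_n$ under CA and conditioning their distances on the DL association event, and separately pinning down the conditional serving-distance law with the correct normalization so that every prefactor matches. Once these distributions are in hand, the closing step is an integration by parts identical in form to Appendix \ref{apx:C}, and the only delicate point is to confirm that the single-regime simplification (no piecewise split at $r_p$, in contrast to Lemma \ref{lemma5}) is legitimate, which follows from $C_{m,n}\le 1$.
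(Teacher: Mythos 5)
Your proposal is correct and takes essentially the same route as the paper's own proof: the conditional serving-distance law you construct (survival function $e^{-\widetilde{\lambda}\pi r^2/C_{m,n}^2}$ after normalizing by $A_{S,m,n}^{(D)}=\lambda_SC_{m,n}^2/\widetilde{\lambda}$) is exactly the paper's CDF in (\ref{71}), which combined with the tail representation gives (\ref{68}), and your conditional interferer-distance distribution normalized by $A_{M,m,n}^{(D)}=\lambda_M/\widetilde{\lambda}$ is precisely the paper's PDF (\ref{70_new}) yielding (\ref{70_2}), with the interference split and the ZF factor $1/(L_S-K_S^{(D)})$ imported from Lemma \ref{lemma4} just as the paper does. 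Your explicit justification that $C_{m,n}\le 1$ makes $r_{S,k}/C_{m,n}\ge r_p$ and hence removes the piecewise split of Lemma \ref{lemma5} is a point the paper only states as a standing assumption in Appendix \ref{apx:C}, so your argument is, if anything, slightly more careful on that detail.
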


\begin{proof}
The CDF of the distance from one user with CA to its tagged SBS is given by
\begin{equation}\label{71}
  F_{r_{S,k}}^{(D)}(\!r\!)\!\!=\!\!{\frac{{\rm Pr}\left(r_p\!\!<\!\!r_{S,k}\!\!<\!r,r_{M,k}\!>\!\frac{1}{C_{m,n}}r_{S,k},r_{M,k}\!>\!r_p\!\right)}{A_{S,m,n}^{(D)}}} \!=\!{\rm exp}\!\!\left(\!\!-\frac{\widetilde{\lambda}}{C_{m,n}^2}\pi r_p^2\!\right)\!-\!{\rm exp}\!\!\left(\!\!-\frac{\widetilde{\lambda}}{C_{m,n}^2}\pi r^2\!\right).
\end{equation}
Similar to the derivation of ${\rm SE}^{({\rm CA})}_{M,m,n}$, by using (\ref{71}), the expression in~(\ref{68}) is derived.
Moreover, the PDF of the distance between $\mbox{SBS}_n$ and the users associated to MBSs while their nearest SBS is $\mbox{SBS}_n$ is given as
\begin{eqnarray}\label{70_new}
f_{S}^{(\!{\rm CA}\!)}\!(\!r\!)\!\!=\!\!\frac{{\rm d}\!\!\left(\!{\rm Pr}\!\!\left(\!r_p\!<\!r_{S,k}\!<\!r\!<\!R,\!r_p\!<\!r_{M,k}\!<\!\frac{r_{S,k}}{C_{m,n}}\!\right)\!\!\right)}{A_{M,m,n}^{(D)}{\rm d}r}\! =\!\frac{2\pi\lambda_S\widetilde{\lambda}}{\lambda_M}re^{-\!\lambda_S\pi r^2}\left(\!e^{-\!\lambda_M\pi r_p^2}\!\!-\!e^{\!-\!\lambda_M\pi \!\left(\!\frac{r}{C_{m\!,n}}\right)^2}\right).
\end{eqnarray}
Then, the desired result in (\ref{70_2}) is obtained by using (\ref{70_new}).
\end{proof}
Based on the spectral efficiency of users associated to $\mbox{MBS}_m$ and $\mbox{SBS}_n$, the spectral efficiency averaged over $K_M$ users, denoted as ${\rm SE}^{({\rm CA})}_{m,n}$, is derived. Then, each possible value of ${\rm SE}^{({\rm CA})}_{m,n}$ is multiplied with the association probability given in Theorem \ref{theorem1}. Finally, the sum of them is obtained as the average spectral efficiency of the system for CA scenario.
\begin{theorem}
The average spectral efficiency of the system for CA scenario, denoted by ${\rm SE}^{({\rm CA})}$, is shown in (\ref{72}),
\begin{equation}\label{72}
  {\rm SE}^{({\rm CA})}=\displaystyle{\sum_{n=1}^N}\displaystyle{\sum_{K_{M,m,n}^{(D)}=0}^{K_{m,n}}} \left\{\displaystyle{\prod_{n=1}^N}\left[\binom{K_{M,m,n}^{(D)}}{K_{m,n}}(A_{M,m,n}^{(D)})^{K_{M,m,n}^{(D)}}(A_{S,m,n}^{(D)})^{K_{S,m,n}^{(D)}}\right]{\rm SE}^{({\rm CA})}_{m,n}\right\},
  \end{equation}
 where~${\rm SE}^{({\rm CA})}_{m,n}$ is given as
\begin{equation}\label{70}
   {\rm SE}^{({\rm CA})}_{m,n}=\frac{\displaystyle{\sum_{n=1}^N}\!\!\left\{\!K_{M,m,n}^{(D)}{\rm SE}^{({\rm CA})}_{M,m,n}\!+\!\!K_{S,m,n}^{(D)}{\rm SE}^{({\rm CA})}_{S,m,n}\right\}}{K_M}.
\end{equation}
\end{theorem}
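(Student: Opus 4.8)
The plan is to follow the same two-level averaging used for the DA case in Theorem~\ref{theorem2}, substituting the CA per-user lower bounds of Lemmas~\ref{lemma5} and~\ref{lemma6} for their DA analogues. First I would freeze a single association configuration by fixing, for every $n$ from $1$ to $N$, the count $K_{M,m,n}^{(D)}$ of users in group $(m,n)$ that attach to $\mbox{MBS}_m$ (so that $K_{S,m,n}^{(D)}=K_{m,n}-K_{M,m,n}^{(D)}$ attach to $\mbox{SBS}_n$). Conditioned on this configuration, each of the $K_{M,m,n}^{(D)}$ MBS-served users contributes the lower bound ${\rm SE}^{({\rm CA})}_{M,m,n}$ and each of the $K_{S,m,n}^{(D)}$ SBS-served users contributes ${\rm SE}^{({\rm CA})}_{S,m,n}$; summing these contributions over $n$ and dividing by the group size $K_M$ produces the conditional average ${\rm SE}^{({\rm CA})}_{m,n}$ in~(\ref{70}).

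Second, I would attach to each configuration its probability of occurrence. By Theorem~\ref{theorem1} the $k$th user associates to $\mbox{MBS}_m$ independently with probability $A_{M,m,n}^{(D)}$, so among the $K_{m,n}$ users of group $(m,n)$ the number attaching to the MBS is binomial with parameters $K_{m,n}$ and $A_{M,m,n}^{(D)}$. Because the association outcomes in distinct SBS groups are independent, the probability of the full configuration factorizes into the product over $n$ of the binomial weights $\binom{K_{M,m,n}^{(D)}}{K_{m,n}}(A_{M,m,n}^{(D)})^{K_{M,m,n}^{(D)}}(A_{S,m,n}^{(D)})^{K_{S,m,n}^{(D)}}$.

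Finally, I would take the expectation over configurations by weighting ${\rm SE}^{({\rm CA})}_{m,n}$ with this product and summing over all admissible multi-indices $(K_{M,m,1}^{(D)},\dots,K_{M,m,N}^{(D)})$, each entry ranging from $0$ to $K_{m,n}$; collecting terms reproduces~(\ref{72}). The main obstacle, which is bookkeeping rather than analytic, is maintaining consistency between the two roles played by the occupancies: in the per-user bounds ${\rm SE}^{({\rm CA})}_{M,m,n}$ and ${\rm SE}^{({\rm CA})}_{S,m,n}$ the loads $K_M^{(D)}$ and $K_S^{(D)}$ (and hence $C_{m,n}$, $\gamma_{M}^{({\rm CA})}$, and $\gamma_{S}^{({\rm CA})}$) enter through the mean-field replacement of~(\ref{38})--(\ref{39}) and are treated as fixed, whereas in the combinatorial weighting the instantaneous counts $K_{M,m,n}^{(D)}$ are the random summation variables. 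Justifying that the fixed probabilities $A_{M,m,n}^{(D)}$, themselves derived under this averaged-load approximation, legitimately describe the law of the random configuration is the delicate point; once it is granted, the remaining summation is purely mechanical.
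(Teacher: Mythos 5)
Your proposal is correct and follows essentially the same route as the paper, which proves this theorem exactly as it proves Theorem~\ref{theorem2}: condition on each association configuration, average the per-user bounds of Lemmas~\ref{lemma5} and~\ref{lemma6} over the $K_M$ users to get~(\ref{70}), weight each configuration by the product of binomial probabilities formed from the association probabilities of Theorem~\ref{theorem1}, and sum over all configurations. Your closing caveat about the mean-field treatment of the loads (and hence of $C_{m,n}$, $\gamma_{M}^{({\rm CA})}$, $\gamma_{S}^{({\rm CA})}$) versus the random counts is well observed, but the paper silently makes the same approximation, so it marks no divergence from their argument.
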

Similar to the spectral efficiency of users with DA, the increasing numbers of antennas lead to lower interference and thus higher spectral efficiency.
Moreover, according to Lemma~\ref{lemma4} and Lemma \ref{lemma6}, the denser users located in a certain area are, the higher the interference is, hence the lower the spectral efficiency of users with CA is.
However, different from the DA scenario, the average spectral efficiency of users with CA also depends on the transmit powers of BSs. The reason is that although the spectral efficiency is calculated using the UL SINR in which the DL transmit powers of BSs are not involved, the association probabilities of users with CA, which are required for the average spectral efficiency, depend on the transmit powers of BSs.

Consider the special case mentioned in Corollary \ref{corollary4}, we derive Corollary \ref{corollary6}.
\begin{corollary}\label{corollary6}
When the products of the transmit powers and antennas of MBSs and SBSs are identical, including the case when $P_M\!=\!P_S$ and $L_M\!=\!L_S$, the lower bound on the spectral efficiency for CA scenario is the same as that for DA scenario. Therefore, DA is not necessary in a homogeneous network.
\end{corollary}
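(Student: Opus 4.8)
The plan is to specialize the general CA spectral-efficiency machinery (Lemmas~\ref{lemma4}--\ref{lemma6} together with (\ref{72})--(\ref{70})) to the single-user regime of Corollary~\ref{corollary4}, then impose the balance condition $P_ML_M=P_SL_S$, and show ingredient by ingredient that the CA lower bound collapses onto the DA lower bound of Corollary~\ref{corollary4}. First I would note that, exactly as in the proof of Corollary~\ref{corollary1}, serving at most one user forces $K_M^{(D)}=K_S^{(D)}=1$, so (\ref{37}) gives $C_{m,n}=(P_SL_S/P_ML_M)^{1/\alpha}$; the hypothesis $P_ML_M=P_SL_S$ (which contains the case $P_M=P_S$, $L_M=L_S$) therefore yields $C_{m,n}=1$ and $\widetilde{\lambda}=\lambda_M+\lambda_S$. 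By Corollary~\ref{corollary3} this already makes the association probabilities coincide, $A_{M,m,n}^{(D)}=A_M^{(U)}$ and $A_{S,m,n}^{(D)}=A_S^{(U)}$, so only the per-BS spectral-efficiency bounds remain to be matched.

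Next I would check that the interference-plus-noise terms agree. In the single-user case the cross-association counts vanish ($K_{S,m,n}^{(D)}=0$ when the user is served by its nearest MBS and $K_{M,m,n}^{(D)}=0$ when served by its nearest SBS), so (\ref{61}) and (\ref{67}) reduce to $\gamma_v^{({\rm CA})}=(Q\overline{r}_{{v}}+\sigma^2)/(L_v-1)$. Since $Q\overline{r}_{{v}}$ is, by construction, the same background-interference term appearing in $\gamma_v^{({\rm DA})}$, and since $\kappa_v^{(U)}=0$ in the single-user DA case, this is precisely the $\gamma_v^{({\rm DA})}$ of (\ref{61_1}). Hence $\gamma_M^{({\rm CA})}=\gamma_M^{({\rm DA})}$ and $\gamma_S^{({\rm CA})}=\gamma_S^{({\rm DA})}$.

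The crux is the degeneration of the CA bound. With $C_{m,n}=1$ the two endpoints of Lemma~\ref{lemma5} become equal, $L_0=L_1={\rm ln}(1+Q/(\gamma_M^{({\rm CA})}r_p^{\alpha}))$, so the factor $(L_1-L_0)$ annihilates the second summand of (\ref{SE_M^CA}) and the integral $\int_{L_0}^{L_1}(\cdots){\rm d}t$ annihilates the fourth; simultaneously $(r_p/C_{m,n})^2=r_p^2$ and $\widetilde{\lambda}=\lambda_M+\lambda_S$. What survives is exactly the two-term bound of Lemma~\ref{lemma2} with $v=M$, giving ${\rm SE}_{M,m,n}^{({\rm CA})}={\rm SE}_{M,m,n}^{({\rm DA})}$. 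The same substitution in (\ref{68}), using $\widetilde{\lambda}/C_{m,n}^2=\lambda_M+\lambda_S$, turns the SBS bound into the Lemma~\ref{lemma2} bound with $v=S$, so ${\rm SE}_{S,m,n}^{({\rm CA})}={\rm SE}_{S,m,n}^{({\rm DA})}$. Feeding the matched bounds and the matched association probabilities into the single-user averaging---(\ref{72})--(\ref{70}) with $K_{m,n}=1$ on the CA side and (\ref{62}) on the DA side---then yields ${\rm SE}^{({\rm CA})}={\rm SE}^{({\rm DA})}$.

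I expect the main obstacle to be the bookkeeping of the third paragraph: one must verify that the piecewise bound of Lemma~\ref{lemma5}, whose four-term structure reflects the two integration regions for the tagged distance $r_{M,k}$ induced by the factor $C_{m,n}$, genuinely degenerates (rather than merely simplifies) when $C_{m,n}=1$, and that the SBS-side bound of Lemma~\ref{lemma6}, derived through the distinct CDF (\ref{71}), reduces under $C_{m,n}=1$ to the \emph{same} functional form as the DA bound. Once the identities $L_0=L_1$ and $\widetilde{\lambda}=\lambda_M+\lambda_S$ are secured, the remaining identifications are mechanical.
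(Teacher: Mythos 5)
Your proposal is correct and follows essentially the same route as the paper's own proof: invoke Corollary~\ref{corollary3} to match the association probabilities, observe that $P_ML_M=P_SL_S$ forces $C_{m,n}=1$ (hence $\widetilde{\lambda}=\lambda_M+\lambda_S$ and $L_0=L_1$, which collapses the CA bounds of Lemmas~\ref{lemma5} and~\ref{lemma6} onto the DA bound of Lemma~\ref{lemma2}), and conclude via $\gamma_{v}^{({\rm CA})}=\gamma_{v}^{({\rm DA})}$. You are in fact more explicit than the paper, which merely asserts the collapsed bound and the equality of the interference-plus-noise terms without the term-by-term verification (vanishing of the $(L_1-L_0)$ term and of the $\int_{L_0}^{L_1}$ integral, and vanishing of the cross-association interference counts in the single-user regime) that you carry out.
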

\begin{proof}
Note that if $P_SL_S=P_ML_M$, association probabilities with DA are the same as those with CA according to Corollary \ref{corollary3}.
Meanwhile, when $C_{m,n}=1$, we have
\begin{equation}
  {\rm SE}_{v,m,n}^{{\rm (CA)}}\!\geq \!\!\!\int_0^{{\rm ln}\left(1+\frac{Q}{\gamma_{{v}}^{({\rm CA})}r_p^{\alpha}}\right)} \!\!{\rm exp}\left(\!-\!\pi\!\left(\!\lambda_M\!+\!\lambda_S\!\right)r_p^2\!\right)\!-\!{\rm exp}\!\!\left(\!-\left(\lambda_M\!+\!\lambda_S\right)\pi\left(\frac{Q}{(e^t-1)\gamma_{{v}}^{({\rm CA})}}\right)^{\frac{2}{\alpha}}\!\right)\!\!{\rm d}t.
\end{equation}
Since $\gamma_{{v}}^{({\rm CA})}=\gamma_{{v}}^{({\rm DA})}$ in this situation, the lower bound on~${\rm SE}_{v,m,n}^{{\rm (CA)}}$ is the same as that on ${\rm SE}_{v,m,n}^{{\rm (DA)}}$.
Then, we derive the desired result.
\end{proof}
Corollary \ref{corollary6} coincides with the analysis in Section \ref{section3}-B where we pointed out that, in the single user scenario, there is no difference between DA and CA in a homogeneous network where $P_ML_M=P_SL_S$, since the UL cell association policy for DA scenario is actually the same as that for CA scenario.
\section{Numerical Results}\label{section5}
In this section, the analytical results of association probabilities are validated by simulations. Then, the lower bounds on the spectral efficiency and the simulations are provided.
Finally, a comparison of the spectral efficiency between the DA and CA scenarios is presented based on the lower bounds and the simulations.

Here, a two-tier heterogeneous network is considered and Monte Carlo trials are conducted to obtain association probabilities and the spectral efficiency. The main parameters for the simulations are listed in TABLE \ref{table1}.

\begin{table}
  \centering
  \caption{Simulation Parameters}\label{table1}
  \begin{tabular}{|c|c|}
    \hline
    Parameter & Value\\
    \hline
    $Q$ & 0.1W\\
    \hline
    $P_S$ &0.1W\\
    \hline
    $\sigma^2$ & $10^{-12}$ \\
    \hline
    $\lambda_M$ & $1\times10^{-7}$ \\
    \hline
    $\alpha$ & 4\\
    \hline
    $R$ & 2000 m\\
    \hline
    \end{tabular}
\end{table}
Firstly, the variations of association probabilities with the transmit power of MBSs under different densities of SBSs are shown in Fig. \ref{Prob_P_M}.
It can be observed that with the increase of $P_M$, users with CA are getting more likely to associate with MBSs due to the higher DL received powers at users associated to MBSs, while the association probabilities for DA remains constant.
The reason is that UL cell associations of users with DA only depend on the UL received signal powers at BSs, which are not related to $P_M$.
Furthermore, as shown in Fig. \ref{Prob_P_M}, increasing density of SBSs means more SBSs are deployed in the coverage area of a MBS. Intuitively, the DL received powers at users associated to SBSs increase.
Meanwhile, higher density of SBSs results in shorter distances between users and SBSs, thus leading to higher UL received powers at SBSs.
Hence, users with both CA and DA are less likely to associate with MBSs in UL.
%
\begin{figure}[h]
\begin{minipage}[t]{0.5\linewidth}
\centering
\includegraphics[width=0.9\textwidth]{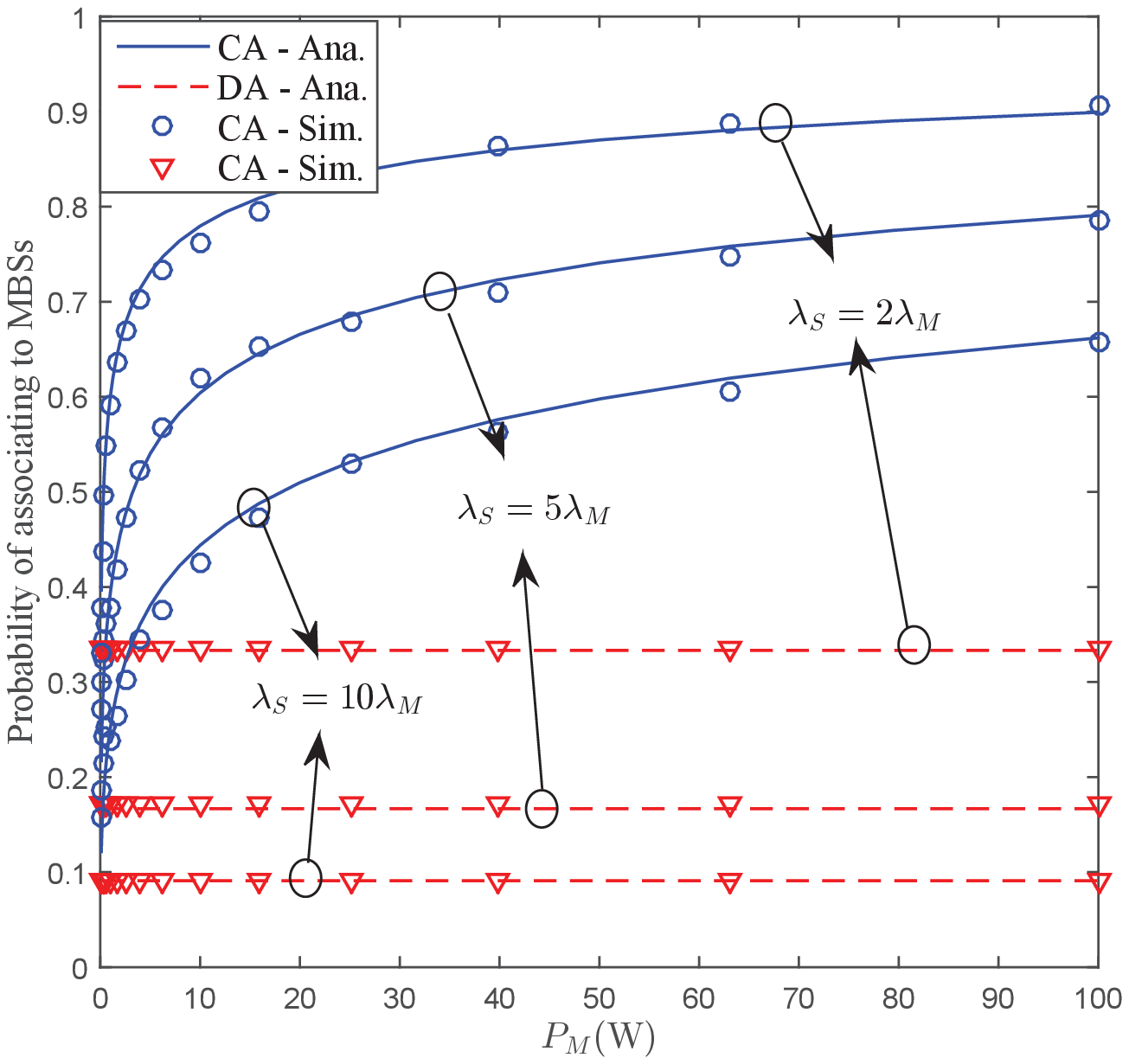}
\caption{Probability of associating to MBSs versus the trans\textrm{-}\protect\\mit power of MBSs for different densities of SBSs ( $\lambda_S$ = \protect\\ $[2\lambda_M,\! 5\lambda_M,\! 10\lambda_M]$) with $\lambda_u\!=\!10\lambda_M$, $L_M\!\!=\!100$ and $L_S\!=\!50$. \label{Prob_P_M}}
\end{minipage}
\hfill
\begin{minipage}[t]{0.5\linewidth}
\centering
\includegraphics[width=0.9\textwidth]{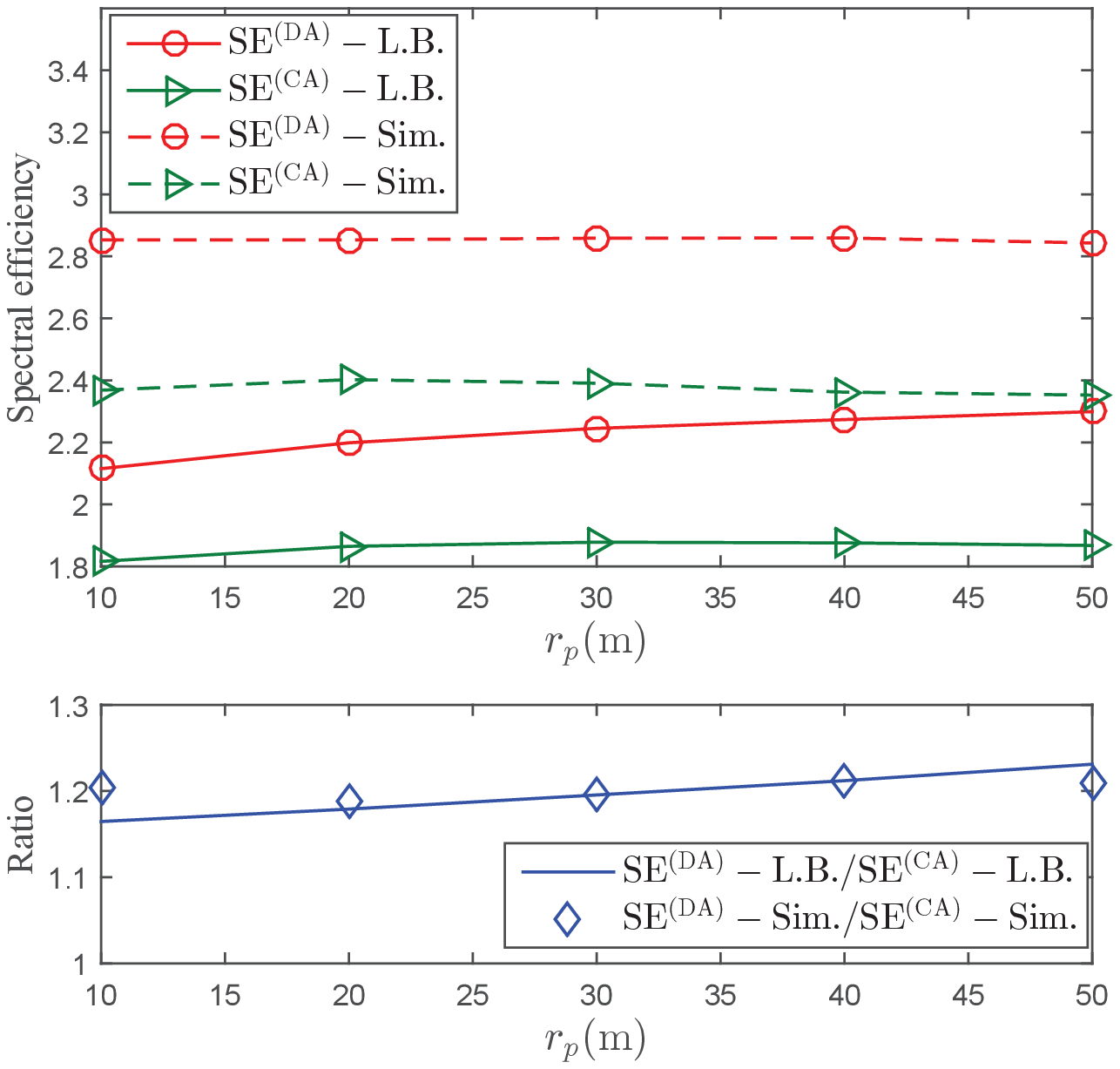}
\caption{Spectral efficiency versus the radius of protection area with $\lambda_S=2\lambda_M$, $\lambda_u=10\lambda_M$, $P_M=40W$,~$L_M=100$~and~$L_S=50$.\label{SE_pro}}
\end{minipage}
\end{figure}

Secondly, the spectral efficiency versus the radius of protection area is illustrated in Fig. \ref{SE_pro}.
As shown, with increasing $r_p$, the spectral efficiencies with DA and CA almost remain unchanged, resulting in the unchanged ratio between~${\rm SE}^{({\rm DA})}$ and ${\rm SE}^{({\rm CA})}$. However, the lower bound on the spectral efficiency with DA increases slightly, leading to an increasing ratio between the lower bounds.
Furthermore, note that the ratio between the lower bounds is quite close to the ratio between the simulations, which means the lower bounds derived in Section~\ref{section4} are able to describe the trends of the spectral efficiencies with DA and CA and the difference between them.
It is also shown in Fig. \ref{SE_pro} that when $r_p=30$m, the ratio between the lower bounds on~${\rm SE}^{({\rm DA})}$ and ${\rm SE}^{({\rm CA})}$ is almost the same as that between the simulations. Therefore, the radius~$r_p$ is set as $30$m for all the rest simulations.
\begin{figure}[h]
\begin{minipage}[t]{0.5\linewidth}
\centering
\includegraphics[width=0.9\textwidth]{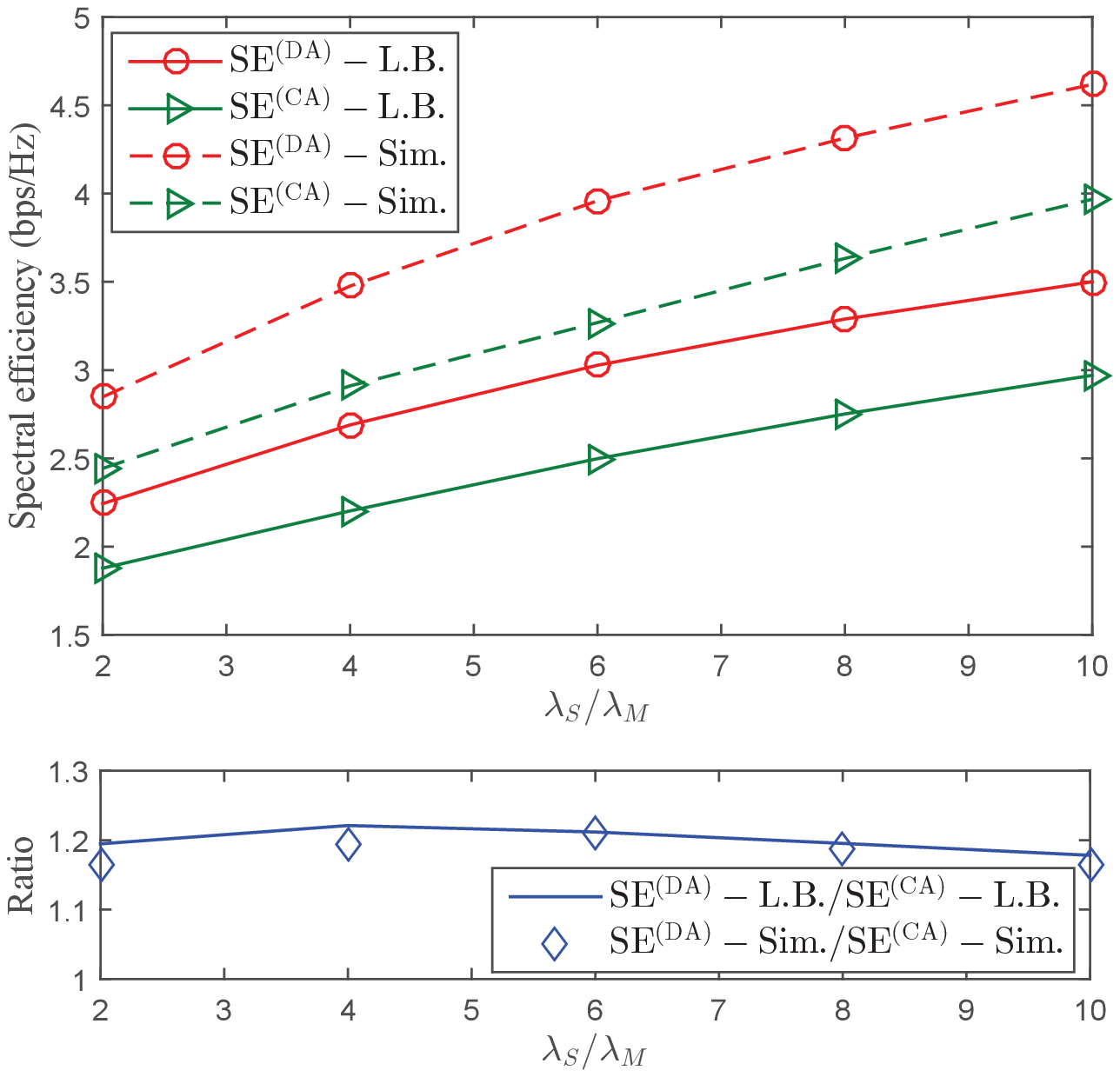}
\caption{Spectral efficiency versus the number of SBS in the \protect\\coverage area of a MBS with $\lambda_u=10\lambda_M$, $P_M=40W$, $L_M\protect\\=100$ and $L_S=50$. \label{SE_lambdaS}}
\end{minipage}
\hfill
\begin{minipage}[t]{0.5\linewidth}
\centering
\includegraphics[width=0.9\textwidth]{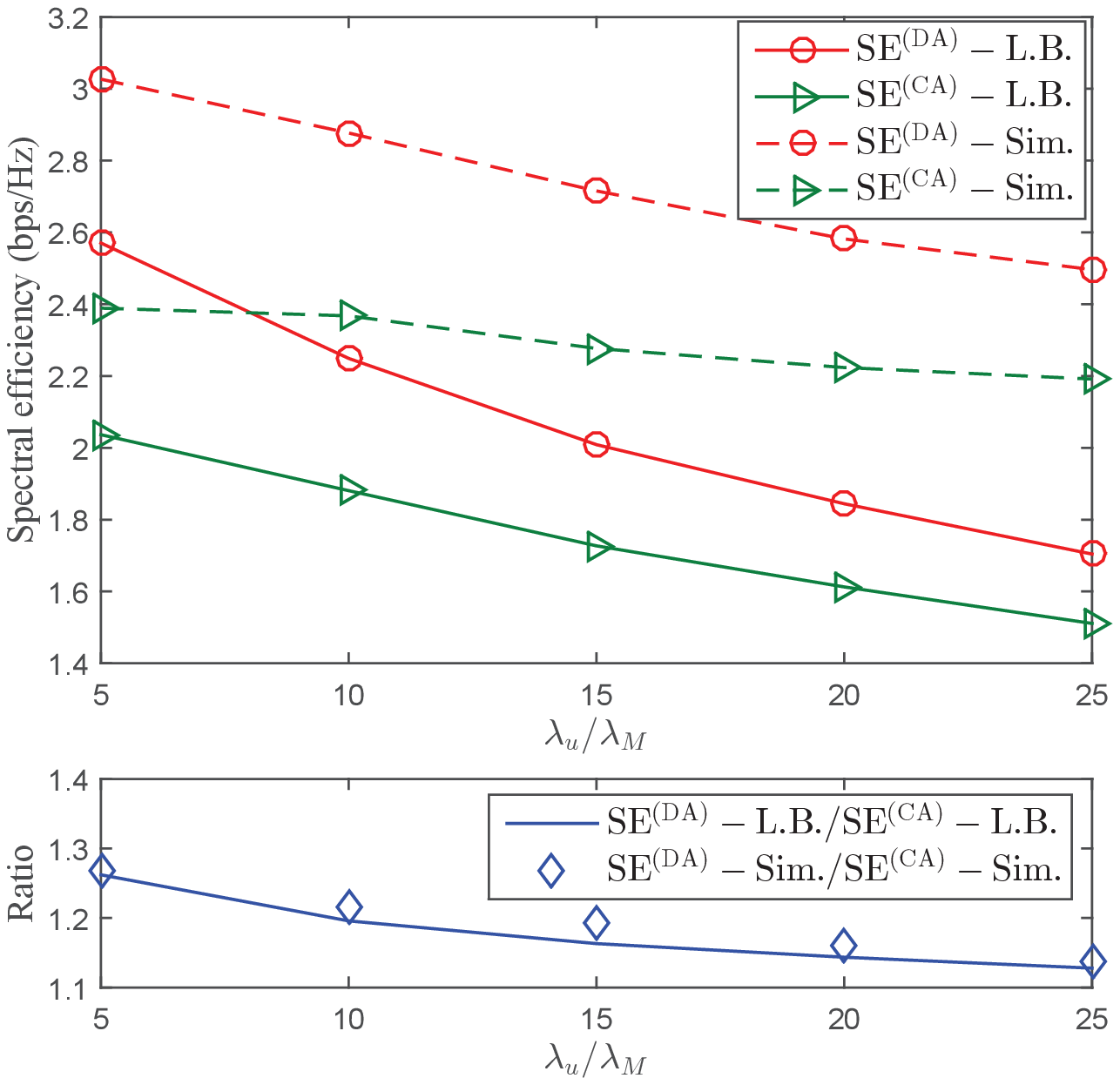}
\caption{Spectral efficiency versus the number of users in the coverage area of a MBS with $\lambda_S=2\lambda_M$, $P_M=40W$, $L_M=100$ and $L_S=50$.\label{SE_lambdad}}
\end{minipage}
\end{figure}

Thirdly, Fig. \ref{SE_lambdaS} and Fig. \ref{SE_lambdad} presents the effect of the number of SBSs and users in the coverage area of a MBS on the spectral efficiency, respectively. Fig.~\ref{SE_lambdaS} shows that with the increase of~$\lambda_S/\lambda_M$, the spectral efficiencies with both DA and CA increase.
This is due to the fact that with the increase of~$\lambda_S/\lambda_M$, more SBSs are deployed in the coverage area of a MBS.
Then, the distance from one user to the nearest SBS becomes smaller, resulting in higher received power at SBSs, which enhances the spectral efficiency. Note that the ratio between the lower bounds with DA and CA as well as the ratio between the simulations are basically unchanged, which means the impact of $\lambda_S/\lambda_M$ on the spectral efficiency with DA is almost the same as that with CA.
Moreover, Fig.~\ref{SE_lambdad} shows that the spectral efficiencies with both DA and CA decrease with the increase of $\lambda_u/\lambda_M$, which is a result of the increasing interference at BSs.
In addition, it can also be noted that the increasing~$\lambda_u/\lambda_M$ leads to the slightly decrease in the ratio between the spectral efficiencies with DA and CA.
The reason is that with the increase of~$\lambda_u/\lambda_M$, the probability of associating to SBS becomes smaller. Hence, the interference at SBSs, which is,~$\gamma_S^{({\rm CA})}$, increases slower than~$\gamma_M^{({\rm CA})}$, resulting in the fact that ${\rm SE}_S^{({\rm CA})}$ decreases slower than~${\rm SE}_M^{({\rm CA})}$.
From~(\ref{70}), it can be seen that since $K_{S,m,n}^{(D)}$ increases faster than $K_{M,m,n}^{(D)}$ and~${\rm SE}_S^{({\rm CA})}$ decreases slower than~${\rm SE}_M^{({\rm CA})}$, the average spectral efficiency with CA decreases slower than that with DA, where~${\rm SE}_M^{({\rm DA})}$ and~${\rm SE}_S^{({\rm DA})}$ decrease at almost the same speed.
\begin{figure}[h]
\begin{minipage}[t]{0.5\linewidth}
\centering
\includegraphics[width=0.9\textwidth]{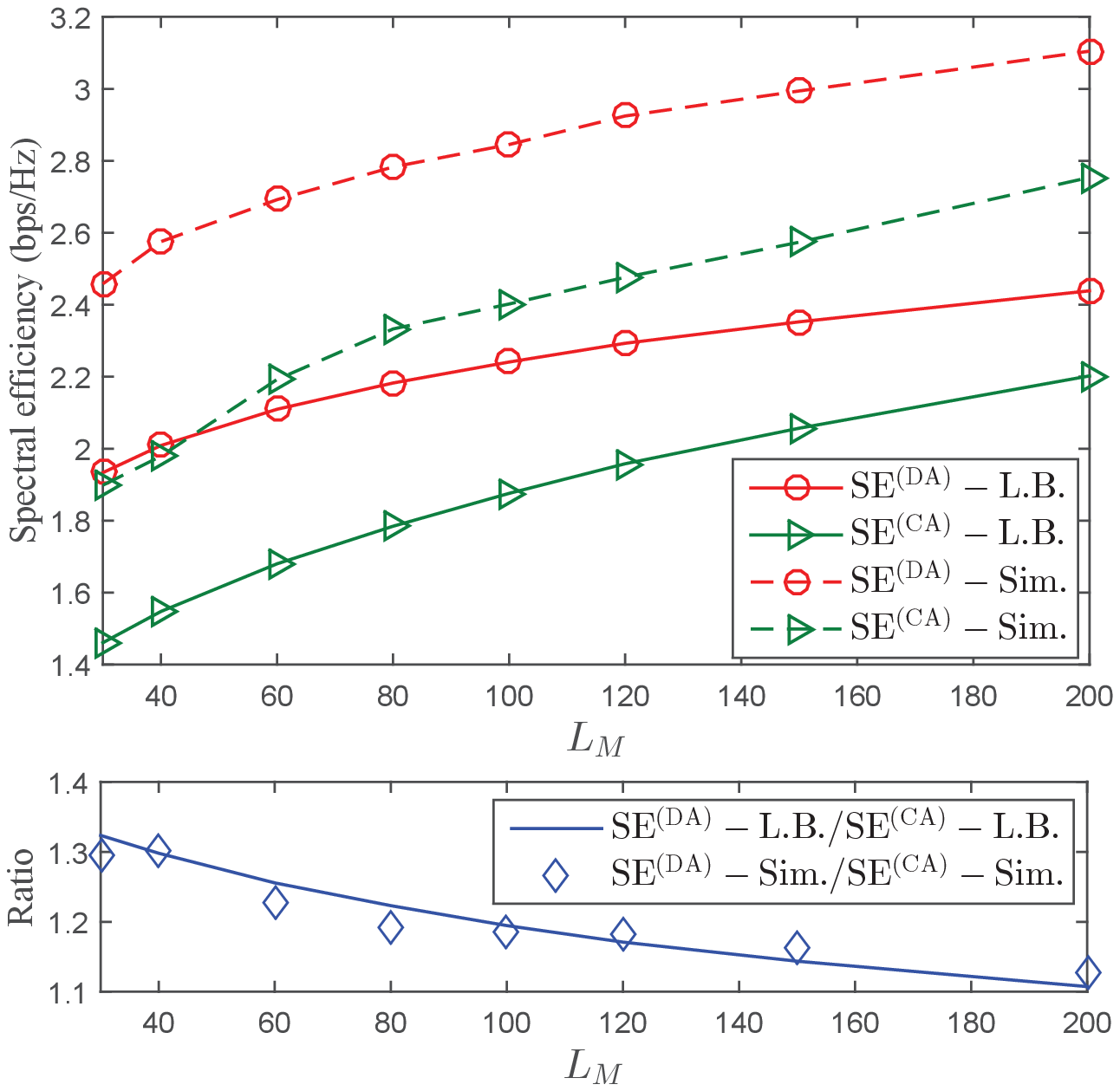}
\caption{Spectral efficiency versus the number of MBS ante\textrm{-}\protect\\nnas with $L_S\!=\!50$, $\lambda_S\!=\!2\lambda_M$, $\lambda_u\!=\!10\lambda_M$ and $P_M\!=\!40W$. \label{SE_L_M}}
\end{minipage}
\hfill
\begin{minipage}[t]{0.5\linewidth}
\centering
\includegraphics[width=0.9\textwidth]{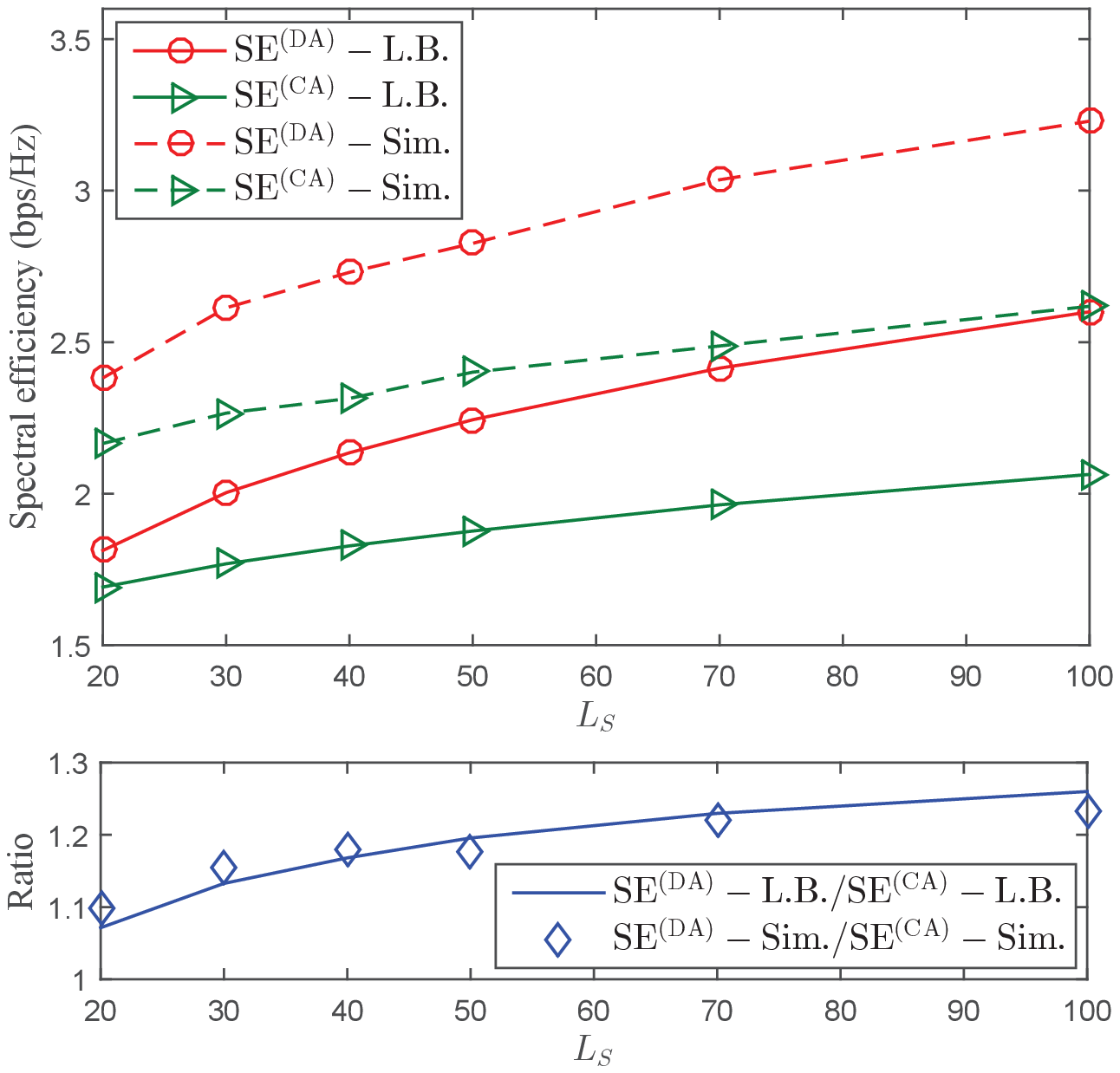}
\caption{Spectral efficiency versus the number of SBS antennas with $L_M=100$, $\lambda_S=2\lambda_M$, $\lambda_u=10\lambda_M$ and $P_M=40W$.\label{SE_L_S}}
\end{minipage}
\end{figure}

Furthermore, Fig. \ref{SE_L_M} presents the effect of $L_M$ on the spectral efficiency, while Fig. \ref{SE_L_S} shows the spectral efficiency when $L_S$ varies from 20 to 100. It can be seen that with the increase of $L_M$ or $L_S$, the spectral efficiencies for both DA and CA scenarios increase.
This is due to the decrease of the expectations of interference plus noise for both DA and CA. Hence, more antennas could be equipped in BSs to enhance the spectral efficiency.
Interestingly, the ratio of spectral efficiency between DA and CA scenarios decreases in Fig. \ref{SE_L_M} and increases in Fig. \ref{SE_L_S}.
It is worth noting that $A_{M,m,n}^{(D)}$ is higher than $A_{S,m,n}^{(D)}$ according to Fig. \ref{Prob_P_M}. More users associating to MBSs results in less interference at MBSs, which implies that ${\rm SE}_M^{({\rm CA})}$ is higher than ${\rm SE}_S^{({\rm CA})}$.
Besides, deploying more MBS antennas results in a higher probability to connect with MBS for CA, contributing to a faster growth in the spectral efficiency in Fig. \ref{SE_L_M}. However, deploying more SBS antennas leads to a higher probability of connecting to SBS.
Thus, the spectral efficiency with CA increases slower than that with DA, as shown in Fig. \ref{SE_L_S}.

Moreover, the variations of the spectral efficiency with the antennas equipped in each SBS are also shown in Fig.~\ref{antennapower}.
However, different from Fig. \ref{SE_L_S}, the sum of antennas in the coverage area of a MBS is considered to be a constant~($[150, 200, 300]$).
Besides, the sum of the transmit power of all the SBSs in the area is also a constant ($0.2W$).
As shown, the increasing $L_S$ leads to the decreasing spectral efficiency.
This result implies that for the limited antennas scenario, instead of BSs with large number of antennas, denser SBSs with less antennas could be deployed to achieve better spectral efficiency.
Furthermore, it can be seen that the increasing number of antennas deployed in a certain area results in the increasing spectral efficiency for each certain value of $L_S$.
This result implies that, in the limited power scenario, instead of SBSs with high transmit power, denser SBSs with lower transmit power~could~be~deployed~to~improve~the~spectral~efficiency.
\begin{figure}[h]
\begin{minipage}[t]{0.5\linewidth}
\centering
\includegraphics[width=0.9\textwidth]{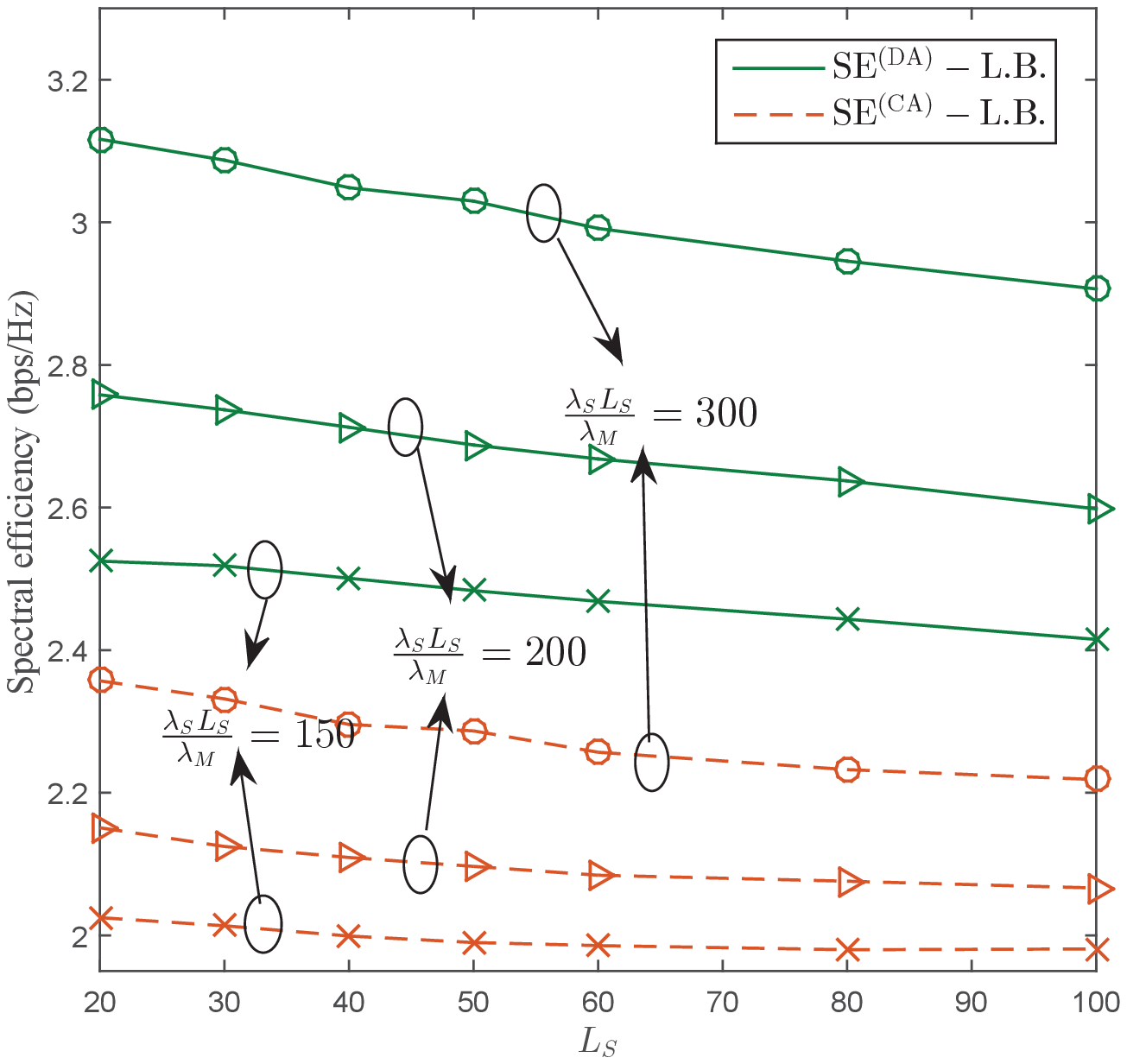}
\caption{Spectral efficiency versus number of antennas each\protect\\ SBS equipped when the sum of SBS antennas in the cover\textrm{-}\protect\\age area of a MBS is $[\,150,\;\;200,\;\;300\,]$, respectively, with\protect\\$\lambda_u=10\lambda_M$, $L_M=100$ and $P_M=40W$. \label{antennapower}}
\end{minipage}
\hfill
\begin{minipage}[t]{0.5\linewidth}
\centering
\includegraphics[width=0.9\textwidth]{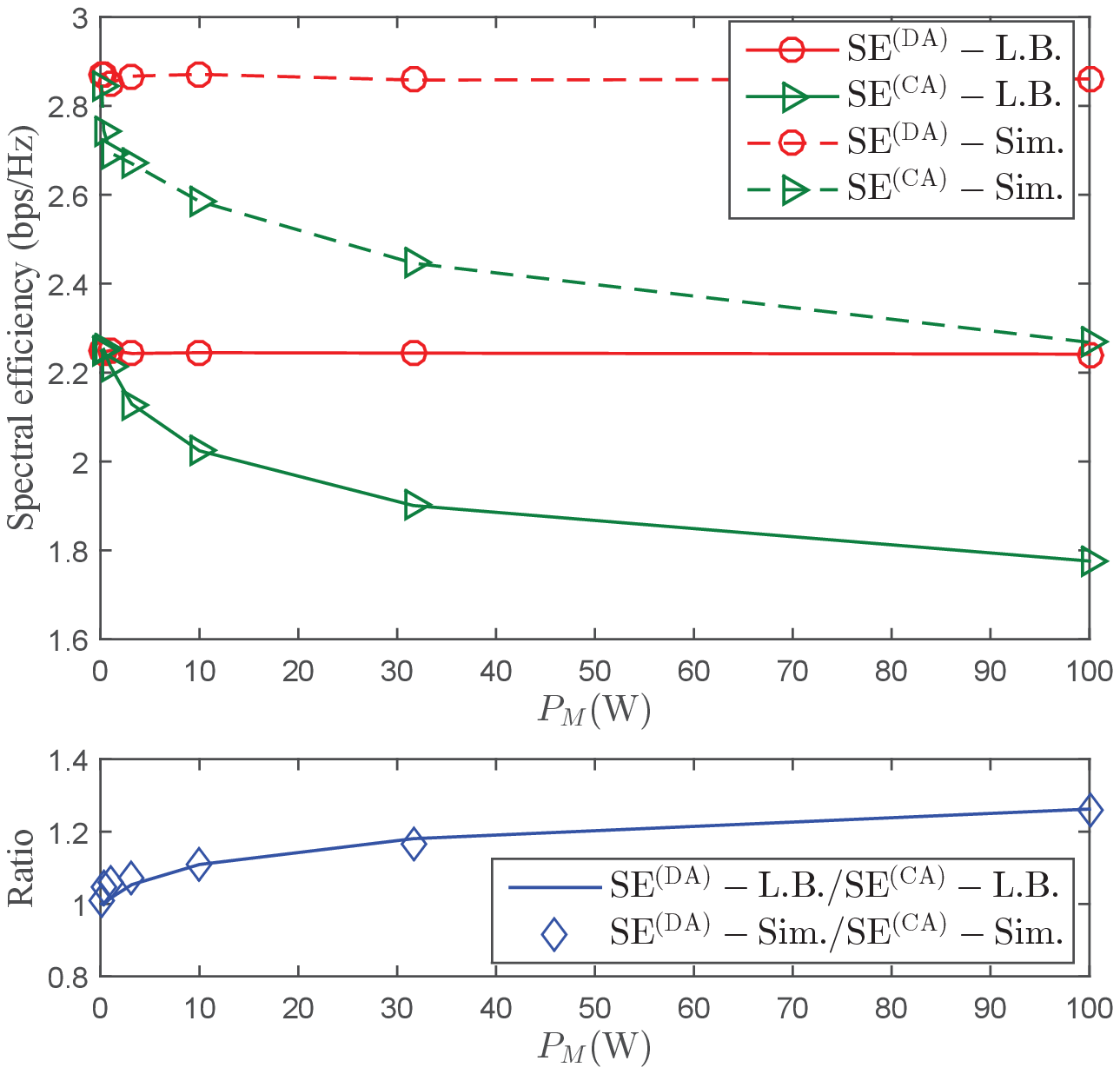}
\caption{Spectral efficiency for different transmit powers of MBSs with $\lambda_S=2\lambda_M$, $\lambda_u=10\lambda_M$, $L_M=100$ and $L_S=50$.\label{SE_P_M}}
\end{minipage}
\end{figure}

Finally, the relationship between the transmit power of MBSs and the spectral efficiency is considered.
Fig. \ref{SE_P_M} shows that with the growth of~$P_{M}$, the spectral efficiency with DA remains constant, since the association probabilities with DA only depend on the densities of BSs and are independent of the transmit powers of BSs.
 However, the spectral efficiency with CA decreases with the increase of $P_{M}$.
Note that when $P_M$ is as small as~$P_S$, there is not much difference between MBSs and SBSs. Hence, the association probabilities with DA and CA are nearly the same. Consequently, the spectral efficiencies with DA and CA are almost the same.
On the other hand, the probability of associating to MBS in DL increases with the increase of~$P_M$, leading to much severer interference at SBSs and much lower spectral efficiency of users associated to SBSs. As a result, the spectral efficiency with CA decreases.

As a final remark, it is shown in above figures that DA significantly improves the spectral efficiency in UL, especially when the transmit power of MBSs is much higher than that of SBSs.
\section{Conclusion}\label{section6}
In this paper, the problem of cell association in a two-tier network has been investigated in multiuser MIMO communications. We have developed the expressions for the association probabilities and the lower bounds on the spectral efficiency for both DA and CA scenarios. The main results in this paper show that in the multiuser scenario, DA could bring impressive improvement on the spectral efficiency.
It is worth to notice that only single connection for DL and UL is considered in this paper, while users may connect to multiple BSs in practice. Therefore, multiple connections for each user and related DA may be studied in the future. It would also be of interest to explore how different UL power control settings affect the spectral efficiency of users with DA in multiuser MIMO communications as well.
\begin{appendices}
\section{}\label{apx:A}
Since $\gamma_{{v}}^{({\rm DA})}$ is the expectation of interference plus noise at $v$BS, we have
\begin{eqnarray}\label{inr}
  &&\hspace{-0.6cm}\gamma_{{v}}^{({\rm DA})}\triangleq\mathbb{E}_{{I_v^{(U)},\mathbf{a}_{v\!,k}}}\!\!\left[I_v^{(U)}+\left\lVert\mathbf{a}_{v,k}\right\rVert^2\sigma^2\right]=\mathbb{E}_{{I_v^{(U)},\mathbf{a}_{v\!,k}}}\!\!\!\left[\sum_{{j\in\Phi_u\backslash \left\{\!K_v^{(U)}\!\right\}}}\hspace{-0.2cm}Qr_{v,j}^{-\alpha}\left\lvert\mathbf{a}_{v,k}^H\mathbf{h}_{v,j}\right\rvert^2\!+\!\left\lVert\mathbf{a}_{v,k}\right\rVert^2\sigma^2\right]\notag\\
   &&\hspace{-0.4cm}=\!\mathbb{E}_{{r_{v,j}}}\!\!\left[\sum_{{j\in\!\Phi_u\backslash\ \left\{\!K_v^{(U)}\!\right\}}}\!\!\!Qr_{v,j}^{-\alpha}\mathbb{E}\!\left[\left\lVert\mathbf{a}_{v,k}^H\right\rVert^2\!\!\left\lVert Y\right\rVert^2\right]\!\right]\!+\!\mathbb{E}\left[\left\lVert\mathbf{a}_{v,k}\right\rVert^2\right]\!\sigma^2,
\end{eqnarray}
where $Y= \frac{\mathbf{a}_{v,k}^H\mathbf{h}_{v,j}}{\left\lVert\mathbf{a}_{v,k}^H\right\rVert}$. According to \cite{6082486}, the symbol $Y$ is Gaussian distributed with zero mean and unit variance, i.e.,~$Y\!\!\!\sim\!\mathcal{CN}(0,1)$. Moreover, it is independent of $\mathbf{a}_{v,k}^H$. Hence,~it~is~obtained~that
\begin{eqnarray}\label{70_1}
  \gamma_{{v}}^{({\rm DA})}\hspace{-0.4cm}&=&\hspace{-0.3cm}\left(\mathbb{E}_{{r_{v,j}}}\!\!\left[\sum_{{j\in\Phi_u\!\backslash\! \left\{\!K_v^{(U)}\!\right\}}}\!\!\!\!\!\!Qr_{v,j}^{-\alpha}\right]\!\!+\!\sigma^2\right)\mathbb{E}\left[\left\lVert\mathbf{a}_{v,k}\right\rVert^2\right].
\end{eqnarray}
 Since $\left\lVert\mathbf{a}_{v,k}\right\rVert^2=\left[(\mathbf{G}_{v}^H\mathbf{G}_{v})^{-1}\right]_{kk}$, the expectation of $\left\lVert\mathbf{a}_{v,k}\right\rVert^2$ is given as
\begin{equation}\label{75}
  \mathbb{E}\left[\left\lVert\mathbf{a}_{v,k}\right\rVert^2\right] =  \frac{1}{K_v^{(U)}}\mathbb{E}\left[{\rm tr}\left[(\mathbf{G}_{v}^H\mathbf{G}_{v})^{-1}\right]\right].
\end{equation}
According to \cite{tulino2004random}, the term $\mathbf{G}_{v}^H\mathbf{G}_{v}\sim \mathcal{W}_{K_v^{(U)}}(L_v,{\mathbf{I}}_{L_v})$ is a~$K_v^{(U)}\!\!\times K_v^{(U)}$ complex central Wishart matrix with $L_v$ degrees of freedom where $L_v>K_v^{(U)}$, then, it is derived as
\begin{equation}\label{76_1}
  \mathbb{E}\left[{\rm tr}\left((\mathbf{G}_{v}^H\mathbf{G}_{v})^{-1}\right)\right]=\frac{K_v^{(U)}}{L_v-K_v^{(U)}}.
\end{equation}

Furthermore, note that the sum of
interference at $v$BS consists of two parts, one is interference from users whose nearest $v$BS is not $\mbox{MBS}_m$ (or $\mbox{SBS}_n$), the other is interference from users whose nearest $v$BS is $\mbox{MBS}_m$ (or $\mbox{SBS}_n$) but are associated to SBS (or MBS). Therefore, the sum of interference can be rewritten as
\begin{equation}\label{77}
 \sum_{{j\in\Phi_u\backslash \left\{\!K_v^{(U)}\!\right\}}}\hspace{-0.2cm}Qr_{v,j}^{-\alpha}=Q\overline{r}_{{v}}\!+\kappa_v^{(U)}\!Q\overline{r}_{{v},1}^{({\rm DA})},
\end{equation}
where $\kappa_v^{(U)}=\sum_{n=1}^N K_{S,m,n}^{(U)}$ for $v=M$ and $\kappa_v^{(U)}=K_{M,m,n}^{(U)}$ for $v=S$.
The symbol $\overline{r}_{{v}}$ in the first term is derived as
\begin{eqnarray}
  \hspace{-0.8cm}\overline{r}_{{v}}\hspace{-0.3cm}\!&=&\hspace{-0.4cm}\!\int\hspace{-0.1cm}r^{-\!\alpha} \!\!\left(F_{r_{v,k}}(r)\!-\!F_{r_{v,k}}(r_p)\!\right)\!\left(\!1\!-\!F_{r_{w,k}}(r_p)\!\right)\!\lambda_u{\rm d}S\!=\!\!\!\int_{r_p}^{R}\hspace{-0.2cm}r^{-\!\alpha} 2\pi\lambda_ur\left(e^{-\lambda_v\pi r_p^2}\!-\!e^{-\lambda_v\pi r^2}\!\right)\!e^{-\lambda_w\pi r_p^2}{\rm d}r.
\end{eqnarray}
Besides, as shown in (\ref{48}), the CDF utilized to calculate $\overline{r}_{{v},1}^{({\rm DA})}$ is developed as
\begin{eqnarray}
  \hspace{-0.1cm}F_{v}^{({\rm DA})}\!(r)\!=\!\!\frac{{\rm Pr}\!\left(r_p\!<\!r_{v,k}\!<\!r\!<\!R,r_p\!<\!r_{w,k}\!<\!r_{v,k}\!\right)}{A_w^{(U)}}\!=\!\frac{2\pi\lambda_v}{A_w^{(U)}}\!\!\!\int_{r_p}^r\!r_{v\!,k}e^{-\!\lambda_v\pi r_{v\!,k}^2}\!\!\left(\!e^{-\!\lambda_w\pi r_p^2}\!-\!e^{-\!\lambda_w\pi r_{v\!,k}^2}\!\right)\!{\rm d}r_{v\!,k}.
\end{eqnarray}
Then, we derive (\ref{48}) with the corresponding PDF. Finally, by substituting (\ref{75})-(\ref{77}) into (\ref{70_1}), we obtain the desired result in~(\ref{INR51}).
\section{}\label{apx:B}

Let $A1$ represent the event that $K_{v,m,n}^{(U)}$ users are associated to $v$BS in UL and $B1$ denote the event that the~$k$th user is associated to $v$BS in UL. Then, the CDF of the distance from one user to its tagged $v$BS in UL, i.e., $F_{r_{v,k}}^{(U)}(r)$, is given as follows,
\begin{eqnarray}
  && \hspace{-1.4cm}F_{r_{v,k}}^{(U)}\!(r)= {\rm Pr}\left(r_{v,k}<r|B1,A1\right)\notag \\
   &&\hspace{-1cm}= \frac{{\rm Pr}\left(r_p<r_{v,k}<r,r_{w,k}>r_{v,k},r_{w,k}>r_p\right)}{A_v^{(U)}}={\rm exp}\!\left(\!-\pi(\lambda_v\!\!+\!\lambda_w)r_p^2\right)\!\!-\!{\rm exp}\!\left(\!-\pi(\lambda_v\!\!+\!\lambda_w)r^2\right)\!\!.\label{82}
\end{eqnarray}
Note that the expectation of a positive random variable $X$ satisfies~$\mathbb{E}[X]=\int_{t>0}{\rm Pr}\left(X>t\right){\rm d}t$. Hence, the lower bound on the spectral efficiency of users associated to $v$BS can be written as
\begin{eqnarray}\label{85}
 {\rm SE}^{({\rm DA})}_{v,m,n}\!\!\ge\!\! \int_0^{+\infty}\!\!\!{\rm Pr}\!\left(\!{\rm ln}\left(\!1+\frac{Qr_{v,k}^{-\alpha}}{\gamma_{{v}}^{({\rm DA})}}\right)\!>\!t\!\right)\!{\rm d}t=\!\!\!\int_0^{{\rm ln}\left(1+\frac{Q}{\gamma_{{v}}^{({\rm DA})}r_p^{\alpha}}\right)} \!F_{r_{v,k}}^{(U)}\left(\left(\frac{Q}{(e^t-1)\gamma_{{v}}^{({\rm DA})}}\right)^{\frac{1}{\alpha}}\right){\rm d}t.
\end{eqnarray}
The upper limit of the integral in (\ref{85}) is derived by considering the fact that the variable $r$ in~(\ref{82}) satisfies $r>r_p$, resulting in the inequality as follows,
\begin{equation}\label{86}
  \left(\frac{Q}{(e^t-1)\gamma_{{v}}^{({\rm DA})}}\right)^{\frac{1}{\alpha}}>r_p.
\end{equation}
Finally, by using the CDF of $r_{v,k}$ in (\ref{82}) and the lower bound in (\ref{85}), the result in (\ref{51M}) is~derived.
\section{}\label{apx:C}
Let $A2$ be the event that $K_{M,m,n}^{(D)}$ users are associated to $\mbox{MBS}_m$ in DL and $B2$ be the event that the $k$th user is associated to $\mbox{MBS}_m$ in DL, the CDF of the distance from one user to its tagged MBS in DL, i.e., $F_{r_{M,k}}^{(D)}(r)$, is given by
\begin{eqnarray}
  \hspace{-0.8cm}F_{r_{M\!,k}}^{(D)}\!(r)\hspace{-0.3cm}\! &=&\!\!\hspace{-0.2cm}{\rm Pr}\!\left(\!r_{M,k}\!\!<\!r|B2,\!A2\!\right)\!\!=\!{\frac{{\rm Pr}\!\left(\!r_p\!<\!\!r_{M\!,k}\!<\!r,\!r_{S\!,k}\!>\!C_{m\!,n}r_M,\!r_{S\!,k}\!>\!r_p\right)}{A_{M,m,n}^{(D)}}}.
\end{eqnarray}
From (\ref{37}), it can be noted that since $P_M$ is much higher than~$P_S$ and $L_M$ is much larger than~$L_S$, the constant $C_{m,n}$ satisfies $C_{m,n}<~1$. Therefore, the CDF is obtained as
\begin{eqnarray}
F_{r_{M,k}}^{(D,1)}(r)&=&\frac{\widetilde{\lambda}}{\lambda_M}\left(\!e^{-\pi\lambda_Mr_p^2}\!-\!e^{-\pi\lambda_Mr^2}\right)e^{-\pi\lambda_Sr_p^2}, \;r_p<r<\frac{r_p}{C_{m,n}}\label{CDF1}\\
   F_{r_{M,k}}^{(D,2)}(\!r\!)&=& e^{-\widetilde{\lambda}\pi \left(\frac{r_p}{C_{m,n}}\right)^2}-e^{-\widetilde{\lambda}\pi r^2},\,r>\frac{r_p}{C_{m,n}},\label{CDF2}
\end{eqnarray}
where $\widetilde{\lambda}=\lambda_M\!+\!\lambda_SC_{m,n}^2$.

Then, the lower bound on the spectral efficiency of users associated to $\mbox{MBS}_m$ with CA is calculated as
\begin{eqnarray}\label{87}
 \hspace{-0.8cm} {\rm SE}^{({\rm CA})}_{M,m,n}\hspace{-0.3cm} &\geq& \hspace{-0.3cm}\int_0^{+\infty}\hspace{-0.2cm}{\rm Pr}\left(r_{M,k}<\left(\frac{Q}{(e^t-1)\gamma_{{M}}^{({\rm CA})}}\right)^{\frac{1}{\alpha}}\right){\rm d}t\notag\\
 &=&\hspace{-0.3cm}\int_0^{L_0}\hspace{-0.1cm} F_{r_{M,k}}^{(D,2)}\left(\left(\frac{Q}{(e^t-1)\gamma_{{M}}^{({\rm CA})}}\right)^{\frac{1}{\alpha}}\right){\rm d}t+ \!\int_{L_0}^{L_1}  F_{r_{M,k}}^{(D,1)}\left(\left(\frac{Q}{(e^t-1)\gamma_{{M}}^{({\rm CA})}}\right)^{\frac{1}{\alpha}}\right){\rm d}t,
\end{eqnarray}
where $L_0$ and $L_1$ are integration limits. Note that $r$ varies from $r_p$ to $\frac{r_p}{C_{m,n}}$ in (\ref{CDF1}) and $r\!>\!\frac{r_p}{C_{m,n}}$ in (\ref{CDF2}). Then, we have
\begin{eqnarray}
  \frac{Q}{\left(e^{L_0}-1\right)\gamma_M^{({\rm CA})}}&=&\frac{r_p}{C_{m,n}},\label{94_1}\\
  \frac{Q}{\left(e^{L_1}-1\right)\gamma_M^{({\rm CA})}}&=&r_p.\label{94_2}
\end{eqnarray}
Based on (\ref{94_1}) and (\ref{94_2}), the expressions for $L_0$ and $L_1$ in~(\ref{71_1}) and (\ref{71_2}) are derived.
Finally, by plugging (\ref{CDF1}) and (\ref{CDF2}) into~(\ref{87}), the desired result in (\ref{SE_M^CA}) is obtained.

\end{appendices}
\bibliographystyle{IEEEtran}
\bibliography{ref}

\end{document}